\documentclass[12pt]{article}
\pdfoutput=1
\pdfminorversion=4
\usepackage{amsmath, amssymb, amsfonts, amsthm}
\usepackage{graphicx,psfrag,epsf}
\usepackage{enumerate}
\usepackage{natbib}
\usepackage{url} 
\usepackage{xr}
\usepackage{xspace}
\usepackage{booktabs}
\usepackage{mathtools}
\usepackage{tabularx} 
\usepackage{hyperref}
\usepackage{multirow}
\usepackage{algorithm}
\usepackage{algorithmic}
\usepackage{subcaption}
\usepackage{color}
\usepackage{booktabs}
\usepackage{bm, stmaryrd}
\usepackage{comment}
\usepackage{verbatim}

\newcommand{\bR}{\mathbb{R}} 
\newcommand{\tr}{\text{trace}}
\newcommand{\that}{\hat{\theta}}
\newcommand{\tbar}{\bar{\theta}}

\newcommand{\bas}[1]{\begin{align*}#1\end{align*}}
\newcommand{\ba}[1]{\begin{align}#1\end{align}}
\newcommand{\cL}{\mathcal{L}}
\newcommand{\cN}{\mathcal{N}}
\newcommand{\bfeps}{\boldsymbol{\epsilon}}
\newcommand{\ttvec}{\text{vec}}
\newcommand{\beq}[1]{\begin{equation}#1\end{equation}}
\newcommand{\bsplt}[1]{\begin{split}#1\end{split}}

\newtheorem{definition}{Definition}
\newtheorem{theorem}{Theorem}
\newtheorem{corollary}{Corollary}

\newtheorem{lemma}{Lemma}

\newtheorem{remark}{Remark}

\newcommand{\sign}{\text{sign}}

\newcommand{\ttspan}{\text{span}}


\addtolength{\oddsidemargin}{-.5in}%
\addtolength{\evensidemargin}{-.5in}%
\addtolength{\textwidth}{1in}%
\addtolength{\topmargin}{-.8in}%
\externaldocument{jasa_supp} 
\title{\bf \Large Total Variation Regularized Tensor-on-scalar Regression}

 \author{
Ying Liu\footnote{These two authors contributed equally to the work.}   \footnote{Correspond to yiliu{\it @}mcw.edu}\\ \small Division of Biostatistics, Medical College of Wisconsin, Milwaukee, WI \\
 \and  Bowei Yan\footnotemark[1] \\ \small Department of Statistics, University of Texas, Austin, TX\\
 \and Kathleen Merikangas\\\small Genetic Epidemiology Research Branch,\\\small  National Institute of Mental Health, Bethesda, MD \\
\and Haochang Shou\\\small Department of Biostatistics, Epidemiology and Informatics,\\ \small University of Pennsylvania, Philadelphia, PA 
 }
\begin{document}
\def\spacingset#1{\renewcommand{\baselinestretch}%
{#1}\small\normalsize} \spacingset{1}

\date{}
\maketitle
\begin{abstract}
In this paper, we propose {\it Total Variation Regularized Tensor-on-scalar Regression} (TVTR), a novel method for estimating the association between a tensor outcome (a one dimensional or multidimensional array) and scalar predictors. While the statistical developments proposed here were motivated by the brain mapping and activity tracking, the methodology is designed and presented in generality and is applicable to many other areas of scientific research.  The estimator is the solution of a penalized regression problem where the objective is the sum of square error plus a total variation (TV) regularization on the predicted mean across all subjects.  We propose an algorithm for the parameter estimation, which is efficient and  scalable in distributed computing platform. Proof of the algorithm convergence is provided and the statistical consistency of the estimator is presented via an oracle inequality.  We presented 1D and 2D simulation results, and demonstrate that TVTR outperforms existing methods in most cases. We also demonstrate the general applicability of the method by two real data examples including the analysis of the 1D accelerometry subsample of a large community-based study for mood disorders and the analysis of the 3D MRI data from the attention deficient/hyperactive deficient (ADHD) 200 consortium. 


\end{abstract}

%
\noindent%
{\it Keywords:}  Penalized regression, Tensor-on-scalar regression, Multivariate linear regression, Total variation denoising
\newpage
\spacingset{1.45}

\section{Introduction}
\label{sec:intro}

High-dimensional data continue to proliferate in modern biomedical research where many of outcomes are one-dimensional or multidimensional arrays (tensors) with complex spatial and temporal structures. For example, wearable computing sensors such as accelerometers and smart phones, have been increasingly used in epidemiological cohort studies such as National Health and Nutrition Examination Survey (NHANES) and UK Biobank \cite{naslund2015feasibility,pantelopoulos2010survey,patel2015wearable}. Instead of collecting subjects' physical activity levels through self-reported survey questionnaire that are scarcely sampled and subject to reporting biases, these devices have allowed us to observe physical activity intensities continuously in high frequencies over an extended period and have shown to be a more powerful way to track activity markers and circadian rhythm patterns that link with health-related behaviors and clinical outcomes such as mood, obesity and aging. Meanwhile in neuroscience, sustained growth in computing power and the increasing affordability to collect high-throughput medical images has made various modalities of brain imaging such as MRI, fMRI and PET available for researchers to search for promising brain markers that are associated with or predict the patient-specific diagnosis and prognosis. To untangle such associations, appropriate statistical methods that acknowledge the multidimensional structures of the outcome data and correlations induced by time or spatial continuity are needed. In particular, we were motivated by two specific studies that collect 1D curve and 3D array as outcome measures. The first example came from the accelerometry subsample of a large community-based study for mood disorders, the National Institute of Mental Health Family Study of Affective Spectrum Disorder \citep{Merikangas2014}. The study consist minute-by-minute physical activity intensity measures (i.e., activity counts) from a set of about 350 participants using the Philips Respronics 24 hours per day over 2 weeks. Despite that the observed activity data are highly noisy as shown in Figure \ref{raw}, we hypothesize that the underlying activity status (inactive, sedentary, moderate and vigorous) transit smoothly from one to another. The second study was from attention deficient/hyperactive deficient (ADHD) 200 consortium as part of the Human Connectome Initiative. The study included both structural and BOLD functional magnetic resonance imaging (MRI) scans of 776 subjects, including 491 typically developing controls (TDs) and 285 ADHD children from multiple centers. We were interested in investigating regions where gray matter concentration differs between TD and AD children by taking the voxel-based morphometry images as outcome in the model. In addition to the low signal-to-noise ratio, the images were also formulated in a three-dimensional spatially coherent space where gray matter exists. 

As previously noted, the challenges of developing a general and robust analytic framework for tensor response data are three-fold: the high dimensionality of the response, the low signal-to-noise ratio, and the complex dependency structures. 
While there exist an enormous body of literature tackling regression with high- or ultrahigh-dimensional data, they often deal with scalar outcomes and high-dimensional data as predictors through various regularizations and sparsity assumptions on the associations with the outcome variable. There have been relatively few works on regression with multivariate response, and even fewer works on regression with tensor responses. One line of research treats the multivariate observations as random realizations of functional objects and conduct function-on-scalar regression (FoSR) \citep{reiss2010fast,goldsmith2016assessing,scheipl2015functional}.
FoSR assumes both the outcomes and the regression coefficients are smooth over time or space, and can be represented in the space expanded by a set of pre-specified basis functions. When the outcome is a multi-dimensional array, one could reform the data into a vector and assumes smoothness within local time intervals or neighborhood of voxels. Such assumptions might violate the spatial continuity embedded in the original data that is beyond Euclidean distance.
Another line of research is reduced-rank multivariate regression~\citep{chen2012sparse,chen2013reduced,chen2012reduced,peng2010regularized}. Reduced-rank regression take advantage of inter-relationship within the response variables to improve prediction accuracy. 
Our method, unlike the aforementioned two categories, model the response data as a tensor object, where the vector response can be viewed as a special case of a 1D tensor. 
To the best of our knowledge, few papers have tackled tensor response with dimensional flexibility except the envelope methods~\citep{li2017parsimonious, Cook2010}.
Envelope method assumes a generalized sparsity in the data, and use BIC for selecting the latent dimension. However, the envelope methods \citep{li2017parsimonious} aims at multi-dimensional tensor outcomes not including 1D cases.

It is worth pointing out that our questions of interest are different from a class of tensor/functional regression models that quantify the relationship between a scalar outcome
and a functional/tensor regressor \citep{goldsmith2011penalized,zhou2013tensor,goldsmith2014smooth,wang2014regularized}. By contrast, our proposal reverses the role by treating the tensor object as response and the vector of covariates as predictors. These two kinds of problems have  different interpretations and applications. The former problem where the tensor object are predictors, focuses on assessing the effect on a clinical outcome as the biological signals (e.g. image) changes, and hence could be used for disease classification and prognosis. The latter where the tensor object is the response variable in regression, aims to identify patterns in the observed data and discover novel markers that are indicative of the disease status and demographics. Hence in this work, we focuses on the interpretability of the identified patterns and scientific conclusions are drawn from the insights gained by the estimated multi-dimensional patterns . 

Secondly, biomedical images or sensor data often suffer from low signal-to-noise ratio and it is common practice to denoise the observed signals before taking it into further processing. One of the most widely used techniques in signal and image processing is Total Variation (TV) denoising. 
It recovers blurred signals or images by penalizing the average intensity difference of adjacent records/pixels using $L_1$ norm. \cite{steidl2006splines} generalized the approach to penalize higher-order gradient difference and pointed out a connection between the dual formulation of TV denoising and spline smoothing of images. TV denoising can be viewed as a support vector regression problem in the discrete counterpart of the Sobolev space. 
Efficient algorithms have been proposed to implement the approach (see e.g. \cite{beck2009fast, condat2013direct,padilla2016dfs}). 
Total variation penalty has also been used in the statistics community to handle regression problems where the parameter has certain smooth structure. For example, fused lasso~\citep{tibshirani2005} was proposed by adding penalty between adjacent sites on genomes when assessing the association between phenotypes and genotypes.

Inspired by the idea of TV de-noising, we propose a regularization term in the regression problem to encourage smoothness in the outcome space. It is important to differentiate it from the fused lasso problem, which assumes the regression coefficients are smooth. Regularization directly imposed on the outcome image and tensor space simultaneously de-noises the complex structured input data in the model fitting step without either data pre-smoothing.
We will see in the experiments that this simultaneous treatment is crucial in preserving weak but consistent signals, that could be blurred out by off-the-shell de-noising algorithms. 

Thirdly, the proposed method can also easily handle complex dependency structure and preserve the inherited adjacency structure of the voxels or time points from the observed data. The smoothness constraint can go beyond adjacent voxels and be customized by any adjacency structures.  This could include, for example, the periodicity in longitudinal data such as seasonal effects in physical activity or temporal brain activation during a task fMRI scan. 

The proposed method is mainly designed for continuous outcomes and does not require explicit assumptions on the noise distributions. For convenience of theoretical analysis, we present an oracle inequality of the estimator, and show that it is statistical consistent when the noise is i.i.d. Gaussian. To solve the aforementioned regularized regression problem, we propose an Alternating Direction Method of Multipliers (ADMM) algorithm, which, our analysis proves, is globally convergent.
In both simulation studies and real data analysis, we compare our method with related work, and the proposed method exhibit strong performance especially when the sample size is small. Statistical inference on the estimated regression coefficients could be achieved using Bootstrap resampling that help the interpretability of the model in the real data analysis.

The rest of the paper is organized as follows. Section \ref{sec:method} describes the proposed method and introduces the model with some notations. In Section \ref{sec:algorithm}, we propose a fast and convergent algorithm to achieve the estimation.  We develop the consistency of the estimator via an oracle inequality in Section \ref{sec:consistency}. We demonstrate the performance in 1D and 2D settings by simulation follows in Section \ref{sec:simu}. We present the data analysis of an activity data of our proposed methods and compared to some other existing methods the result in \ref{sec:real}. We conclude the paper with some discussion in Section~\ref{sec:discussion}. Detailed proofs are outlined in the web appendices.

\section{Preparations}
\label{sec:method}

\subsection{Notations and Operations}
We begin with the introduction for the notations used in the paper. Throughout the paper we use lower-case and upper-case letters to represent vectors and matrices. Multidimensional array $A \in \bR ^{r_1 \times \dots \times r_m}$ is called an m{\it th}-order-tensor. 
 We will use $\|\cdot\|_F$ for Frobenius norm of a tensor,
$\|A\|^2_F= \sum_{i_1,\dots, i_m}A_{i_1,\dots, i_m}^2$. We denote the $\ell_1$ norm of a  as the $\ell_1$ norm of its vectorization: $\|A\|_{\ell_1}= \|\ttvec(A)\|_{\ell_1}=\sum_{i_1,\dots,i_m}|A_{ij}|$. $\otimes$ represents the Kronecker product between two tensors. The $\ttvec(\cdot)$ operator denote the vectorization operation that stack the entries of a tensor into column vector, so that an entry $a_{i_1,\dots,i_m}$ of $A$ maps to the $j$th entry of $\ttvec(\cdot)$, where $j=1+\sum_{k=1}^m(i_k-1)\Pi_{k'=1}^{k-1}r_k'$. and $\text{mat}(\cdot)$  denote the  the inverse operator. For an integer $n$, denote $[n]=\{1,2,\dots,n\}$.

\subsection{TV de-noising}
Next we review the total variation denoising objective where our method inherited the $L_1$ penalty for smoothing. The TV-denoising method for a 1D case,
$$\|Y-\theta\|^2 + \lambda\sum_{i=1}^{M-1} |\theta_{i+1}-\theta_{i}\|_{\ell_1},$$
where $Y\in \bR^n$ is a response vector, $\theta =(\theta_1,\dots,\theta_n)\in \bR^n = E(Y)$ is the mean of vector of $Y$, where we often assume $\theta_i=\theta_{i+1}$. 

In a m-dimensional setting, $Y$ is an $m^{\rm th}$ order tensor, $$\|Y-\Theta\|^2 + \lambda\|D \ttvec(\Theta)\|_{\ell_1},$$ where $D$ is an incidence matrix that we will introduce in the next session.

\section{Model and Inference}
\label{sec:model}
\subsection{Regression Model}
Without loss of generality, for a more intuitive presentation, we now introduce our method for 2D tensor outcomes. 
Suppose we observe the covariates and 2D tensor outcomes (for example 2D images) of size $m_1\times m_2$ from $n$ subjects $(X_i,\bar{Y}_i)_{i=1}^n$, where $X_i\in \bR^{p}, \bar{Y}_i\in \bR^{m1\times m2}$, $M=m_1 m_2$ is the total number of outcome entries and $Y_i=\ttvec({\bar{Y}_i}) \in \bR^{M}$ 
Consider the following model 
\bas{
Y_{i} =& X_i^T\Gamma + \epsilon_{i} =\sum_{t \in [p]} X_{it}\Gamma_{t\cdot}+\epsilon_i, 
}
where $\Gamma\in \bR^{p\times M}$ is the coefficient matrix of interest, and each row $\Gamma_{t \bullet}\in \bR^{M}$ is an image, representing the coefficient map corresponds to the $t${\it th} feature. 
While our theory is derived under the case of i.i.d. Gaussian noise, the method can be applied to other noise structure with real-value responses. 
Let $X\in \bR^{n\times p}$ be the design matrix and $Y\in \bR^{n\times M}$ be the matrix of outcomes.
Now we can stack all the observations and reformulate the problem into the following matrix form:
\ba{
\ttvec(Y^T) = \ttvec((X\Gamma)^T)+\bfeps, 
\label{model}
}
where 
$\ttvec(Y^T)=\left(\begin{array}{c}
Y_{1}\\Y_{2}\\\vdots\\Y_{n}\\
\end{array}\right) \in \mathbb{R}^{nM} $ is a long vector. 

Define the hat matrix as $H:=X(X^TX)^{-1}X^T$, which is consistent with that used in the classical regression analysis. We have the
projection matrix to $\text{span}(X)^{\perp}$ as $I-H$. In the vectorization form, we can define the extended projection matrix to project measurements from each voxel to the linear space $\text{span}(X)^{\perp}$.
\bas{
I_{nM}-H_v = (I-H) \otimes I_M\in \mathbb{R}^{nM\times nM}.
}
Notice that  for any $\Gamma  \in \mathbb{R}^{p\times M}$,
\bas{
(I-H_v) \ttvec((X\Gamma)^T) =& ((I-H) \otimes I_M )\ttvec(\Gamma^TX^T)= \ttvec(I_M\Gamma^TX^T (I-H)^T)=\bold{0}
}
Now let us introduce the smoothing graph in this problem. In a 2D or higher dimensional image outcome, it is usually the intrinsic nature that the adjacent voxels are similar, except for a small number of edges.  This can be summarized by defining a graph $G$, whose nodes are all the voxels in the image, and edges are the pairs of voxels which should have values that are close. For example, the grid graph is the most commonly used, where each voxel is connected to the four voxels adjacent to it in the image. The graph $G$ can also be chosen to reflect some sophisticated affinity relationship.

Let $D$ be the incidence matrix for the graph $G$ whose edges represent the smoothing affinity. To be specific, let $n$ and $m$ be the number of vertices's and edges respectively, $D\in \bR^{M \times m}$ where 
\[D_{i,j} = \left\{\begin{aligned} -1 & \mbox{ if the edge $e_j$ leaves vertex $v_i$;}\\ 1 &  \mbox{ if it enters vertex $v_i$;}\\ 0 & \mbox{ otherwise}\end{aligned} \right. \]
Note the orientation of the edge does not matter, since it corresponds to a negation of $D$, which does not change the $\ell_1$ norm.
Define the extended incidence matrix $D_v=I_n\otimes D \in \bR^{nM\times nm}$, then
$ \| X\Gamma D\|_{\ell_1}=\sum_{i=1}^n \|X_i\Gamma D\|_{\ell_1}$.
We propose the Total Variation Regularized Tensor-on-scalar Regression (TVTR) by minimizing the following loss.
\begin{equation}
\min_\Gamma \frac{1}{2}\| Y-X\Gamma\|_F^2 +\lambda \| X\Gamma D\|_{\ell_1}.
\label{eq:obj}
\tag{P}
\end{equation}
where $\lambda$ is a tuning parameter controlling the tradeoffs between the linear correlation and the smoothness of the fitted image.
It has been observed in \cite{steidl2006splines} that total variation regularization bears strong relationship with functional analysis with splines. To be more specific, it is shown in a strictly discrete setting that thin plate splines \citep{duchon1977splines} of degree $m-1$ solve also a minimization problem with quadratic data term and $m$-th order TV regularization term.



\subsection{Algorithm}
\label{sec:algorithm}
Our goal is to estimate the multi-dimensional coefficients $\Gamma$ in the objective function~\eqref{eq:obj}. Although it is a convex function, the non-smoothness of the total variation regularization makes traditional algorithms suffer from slow convergence. 
In many applications, the optimization problem involves
a large number of variables, and cannot be efficiently
handled by generic optimization solvers. 

Instead, we propose to use the Alternating Directional Method of Multipliers (ADMM).
ADMM was originally introduced by~\cite{glowinski1975approximation,gabay1976dual} and has been frequently used in distributed settings (see~\cite{boyd2011distributed} for an overview). The idea of using ADMM to solve total variation regularized problems were first derived by \citet{wahlberg2012admm}, which introduced an efficient and scalable optimization algorithm for problems satisfying  a {\it TV-separability condition}.  The condition requires the cost function is the sum of two terms, one that is separable in the variable blocks, and a second that is separable in the difference between consecutive variable blocks. In each iteration of our method, the first step involves separately optimizing over each variable block, which can be carried out in parallel.
The second step is not separable in the variables, but can be carried out by applying ADMM to $L_1$ mean filtering very efficiently. 
This algorithm has been successfully applied to fused lasso and total variation denoising problems and exhibited competitive performance. It is also very easy to extend to distributed computing setting. 


To make our problem satisfy the above {\it TV-separability condition}, we consider a reformulation of the original problem. Define $\theta=X\Gamma\in \bR^{n M}$, the problem is equivalent to,
\beq{
\bsplt{
\min_\theta \quad & \frac{1}{2}\| \ttvec(Y^T) -\theta\|^2_F +\lambda \| D_v^T\theta \|_{\ell_1};\\
\textrm{s.t.} \quad & (I-H_v)\theta=0.
}
\label{eq:cons_opt}\tag{CP}
}
The solution of \eqref{eq:obj} and that of \eqref{eq:cons_opt} have the following relationship.
\begin{lemma}
\label{lem:theta2gamma}
Let $\that$ be the optimal solution of \eqref{eq:cons_opt}, and $\hat{\Gamma}$ be the optimal solution of \eqref{eq:obj}. If rank($X)=p$, then 
$
\hat{\Gamma}=(X^TX)^{-1}X^T \text{mat}(\that)_{n\times M}
$.
\end{lemma}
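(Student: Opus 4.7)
The plan is to show that \eqref{eq:obj} and \eqref{eq:cons_opt} are equivalent reformulations of each other under the substitution $\theta = \ttvec((X\Gamma)^T)$, so that their minimizers are in one-to-one correspondence. Once this is established, the claimed formula follows by solving the linear system $X\hat\Gamma = \text{mat}(\hat\theta)_{n\times M}$ via least squares, using rank$(X)=p$ to ensure invertibility of $X^TX$.

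First I would verify that the feasible set of \eqref{eq:cons_opt} is exactly $\{\ttvec((X\Gamma)^T) : \Gamma \in \bR^{p\times M}\}$. The forward direction is the computation already displayed in the paper, namely $(I-H_v)\ttvec((X\Gamma)^T)=\mathbf{0}$. For the reverse direction, suppose $(I-H_v)\theta=0$. Then $\text{mat}(\theta)_{n\times M}$ lies in the column space of $X$ column-by-column (since $I-H$ annihilates it on the left), hence equals $X\Gamma$ for some $\Gamma \in \bR^{p\times M}$; uniqueness of $\Gamma$ follows from rank$(X)=p$.

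Next I would check that the objectives match under the correspondence. The data-fitting terms agree immediately: $\|Y-X\Gamma\|_F^2 = \|\ttvec(Y^T)-\ttvec((X\Gamma)^T)\|^2$. For the penalty, using the Kronecker identity $(A\otimes B)\ttvec(C)=\ttvec(BCA^T)$, one computes
\[
D_v^T\theta = (I_n\otimes D^T)\ttvec((X\Gamma)^T) = \ttvec(D^T(X\Gamma)^T) = \ttvec((X\Gamma D)^T),
\]
so $\|D_v^T\theta\|_{\ell_1}=\|X\Gamma D\|_{\ell_1}$ since $\ttvec$ preserves the $\ell_1$ norm. This paragraph, together with the bijection from the first step, shows that \eqref{eq:obj} and \eqref{eq:cons_opt} have the same optimal value and that $\hat\theta=\ttvec((X\hat\Gamma)^T)$ at optimum.

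Finally, from $\text{mat}(\hat\theta)_{n\times M}=X\hat\Gamma$ I would left-multiply by $X^T$ and invert $X^TX$ (which is nonsingular under rank$(X)=p$) to obtain $\hat\Gamma=(X^TX)^{-1}X^T\,\text{mat}(\hat\theta)_{n\times M}$, completing the lemma. The main potential obstacle is bookkeeping: one must be careful that $\text{mat}(\cdot)_{n\times M}$ really is the inverse of $\ttvec((\cdot)^T)$ applied to an $n\times M$ matrix, and that the Kronecker-vec identity is applied with the correct shapes. These are verifiable from the definitions in Section~\ref{sec:method}, so no substantive mathematical difficulty is expected.
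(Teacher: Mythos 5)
Your proposal is correct and follows essentially the same route as the paper's own (very terse) proof: the change of variables $\theta=\ttvec((X\Gamma)^T)$, equivalence of \eqref{eq:obj} and \eqref{eq:cons_opt}, and then recovery of $\hat\Gamma$ by inverting $X^TX$. The only difference is that you spell out the two steps the paper leaves implicit — that the constraint $(I-H_v)\theta=0$ characterizes exactly the image of $\Gamma\mapsto\ttvec((X\Gamma)^T)$, and that $\|D_v^T\theta\|_{\ell_1}=\|X\Gamma D\|_{\ell_1}$ via the Kronecker--vec identity — both of which check out.
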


Now to solve \eqref{eq:cons_opt}, we introduce two auxiliary variables and write the original problem in the following form.
\ba{
\min & \quad \frac{1}{2}\|y-\theta\|^2+\delta((I-H_v)\eta=0)+\lambda \|D_v^T\mu\|_{\ell_1} \nonumber \\
s.t. & \quad \theta=\mu, \theta=\eta \label{eq:augmented_admm}
}
where $\delta(\cdot)$ is the characteristic function, which takes 0 if the condition in the parenthesis is satisfied and infinity otherwise. 
The objective function is separable for $\theta, \eta$ and $\mu$, we solve it with the Alternating Direction Method of Multiplier (ADMM), which is summarized in Algorithm~\ref{alg:palm}.
\begin{algorithm}[h]
\caption{ADMM for Total Variation Regularized Tensor-on-scalar Regression}
\label{alg:alm}
\begin{algorithmic}[1]
\STATE {\bf Input}: Design matrix $X$, Images $Y$, 
tuning parameter $\rho$, error tolerance $tol$.
\STATE Initialize  $\theta^{(0)}, \mu^{(0)}, \eta^{(0)}$ randomly; $U^{(0)}=V^{(0)}=\bm{0}_{nM}$; $k=0$; 
\WHILE{not converge}
\STATE $\theta^{(k+1)}=(\textrm{vec} (Y)+\rho(\eta^{(k)}-U^{(k)}+\mu^{(k)}-V^{(k)}))/(2\rho+1)$;
\STATE $\eta^{(k+1)}= H_v(\theta^{(k+1)}+U^{(k)})$;
\STATE $\mu^{(k+1)} = \arg\min_\mu \lambda\|D_v^T\mu^{(k)}\|_{\ell_1}+\frac{\rho}{2} \|\theta^{(k+1)} +V^{(k)}-\mu^{(k)}\|_F^2$;
\STATE $U^{(k+1)} = U^{(k)} +\theta^{(k+1)} -\eta^{(k+1)}$;
\STATE $V^{(k+1)} = V^{(k)}+\theta^{(k+1)} - \mu^{(k+1)}$;
\STATE $k=k+1$;
\STATE converge {\bf if} $\max\{ \|\theta^{(k+1)}) - \theta^{(k)})\|_F/\|\theta^{(k)})\|_F, \|H_v\theta^{(k+1)}\|_F\} < tol$.
\ENDWHILE
\STATE {\bf Output}: $\Gamma=(X^TX)^{-1}X^T\text{mat}(\theta^{(k+1)})_{n\times M}$;
\end{algorithmic}
\label{alg:palm}
\end{algorithm}

In Algorithm~\ref{alg:palm}, line 4-6 are the primal variable updates and line 7-8 are the dual variable updates. It is common practice to set $\rho=1$. The update of $\theta$ in line 4 and projection step in line 5 both have analytical form and only consist of matrix multiplication. The computational bottleneck is the subproblem in line 6, which can be recognized as a graph fused lasso (GFL) problem that takes observation $\theta_{k+1}+V_k$, regularization parameter $\lambda/\rho$, and graph incidence matrix $D_v^T$.
GFL is a generalization of fused lasso. It penalizes the first differences of the signal across edges and can be efficiently solved with many off-the-shell GFL solvers. Below we highlight several solvers for GFL, and compare their efficiency under different assumptions.

When the graph is a chain graph, the GFL reduces to a 1D fused lasso problem, and can be solved in $O(n)$ time, by, for example, the ``taut string" algorithm derived by \cite{davies2001local} and the dynamic programming based algorithm by \cite{johnson2013dynamic}. These methods are fast in practice. 
For 2D cases, \cite{kolmogorov2016total} generalizes the dynamic programming idea to solve the fused lasso problem on a tree in $O(n\log n)$ time. \cite{barbero2014modular} uses operator splitting techniques like Douglas-Rachford splitting and extends fast 1D fused lasso optimizers to work over grid graphs. Over general graphs structure, numerous algorithms have been proposed in recent years: \cite{chambolle2009total} described a direct algorithm based on a reduction to parametric max flow programming; \cite{chambolle2011first} described what can be seen as a kind of preconditioned ADMM-style algorithm; \cite{kovac2011nonparametric} described an active set approach; \cite{landrieu2016cut} derived a method based on graph cuts. In our experiments, we use the one proposed in \cite{tansey2015fast} which leverages fast 1D fused lasso solvers in an ADMM decomposition over trails of the graph. 
One can choose the graph fused lasso solver that best suits the target graph structure.

Despite being a popular algorithm for many constrained optimization problems, the convergence of ADMM does not have a simple and unified answer. We present the convergence of Algorithm~\ref{alg:palm} in the following theorem. 
\begin{theorem}
Algorithm~\ref{alg:palm} converges linearly to the unique global optimum of problem~\eqref{eq:cons_opt}.
\label{thm:admm_converge}
\end{theorem}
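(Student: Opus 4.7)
The plan is to recast Algorithm~\ref{alg:palm} as a standard two-block ADMM in disguise, and then invoke known convergence guarantees. Write $x := \theta$ and $z := (\eta, \mu)$, and encode the constraints $\theta = \eta$, $\theta = \mu$ of \eqref{eq:augmented_admm} as $Ax + Bz = 0$ with $A$ the vertical stack $[I; I]$ and $B$ the block-diagonal $\diag(-I,-I)$. Because the non-smooth part $g(\eta,\mu) := \delta((I-H_v)\eta = 0) + \lambda\|D_v^T\mu\|_{\ell_1}$ is separable across the two components of $z$, jointly minimizing the augmented Lagrangian over $z$ decouples exactly into lines 5 and 6 of Algorithm~\ref{alg:palm}; meanwhile line 4 is the $x$-update and lines 7--8 are the (scaled) dual ascents for the two stacked constraints. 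Hence the scheme is a genuine two-block ADMM, not a three-block one (which in general need not converge), and this reformulation is the crux that unlocks standard theory.

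Next I would establish that \eqref{eq:cons_opt} has a unique primal--dual optimum. The objective $f(\theta) = \tfrac{1}{2}\|\ttvec(Y^T) - \theta\|^2$ is $1$-strongly convex in $\theta$ and the feasible set $\{\theta : (I-H_v)\theta = 0\}$ is a nonempty closed linear subspace, so a unique minimizer $\theta^*$ exists; at any optimum $\eta^* = \mu^* = \theta^*$, pinning down the auxiliary blocks. Slater's condition is automatic for affine constraints, so multipliers $U^*, V^*$ exist. Classical two-block ADMM convergence (e.g., Boyd et al., 2011) then yields $(\theta^{(k)}, \eta^{(k)}, \mu^{(k)}, U^{(k)}, V^{(k)}) \to (\theta^*, \eta^*, \mu^*, U^*, V^*)$ for any $\rho > 0$, and Lemma~\ref{lem:theta2gamma} translates this into convergence of the returned $\hat{\Gamma}$ to the unique minimizer of \eqref{eq:obj}.

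To upgrade from sublinear to linear convergence, I would invoke the result of Deng and Yin (2016) on global linear convergence of two-block ADMM. Their strongly-convex scenario requires $f$ to be strongly convex with Lipschitz gradient, $A$ to have full column rank, and $B$ to have full row rank. All three hold here: $f$ is quadratic with identity Hessian (so both strong-convexity modulus and Lipschitz constant equal $1$); the stacked $A = [I;I]$ has full column rank $nM$; and $B = -I_{2nM}$ trivially has full row rank. This yields $Q$-linear convergence in a suitable $G$-weighted norm with contraction factor depending only on $\rho$, the strong-convexity modulus, and $\sigma_{\min}(A)$, which gives the claim.

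The main obstacle is not any technical inequality but the opening reformulation step: three-block ADMM is famously not guaranteed to converge without extra assumptions, and the whole argument hinges on the observation that the $\eta$- and $\mu$-subproblems are completely uncoupled given $\theta$, so that the iteration is secretly a two-block method. Once that is recognized, verifying the Deng--Yin rank and strong-convexity hypotheses is routine bookkeeping because the constraint matrices $A$ and $B$ are built out of identity blocks by construction.
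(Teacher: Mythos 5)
Your proposal follows essentially the same route as the paper's proof: both hinge on recognizing that the $\eta$- and $\mu$-subproblems decouple given $\theta$ (the paper phrases this via $A_2^TA_3=0$ in the first-order optimality conditions, following Chen et al.), so the three-block scheme is a genuine two-block ADMM, and both then verify strong convexity, smoothness, and rank conditions on the constraint matrices to invoke an off-the-shelf linear convergence theorem. The only difference is the black-box result cited at the end (you use Deng and Yin, the paper uses Theorem~7 of Nishihara et al.), which does not change the substance of the argument.
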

To obtain the result, we first convert problem~\ref{eq:cons_opt} into a two-block problem; then apply recent result in general two-block problems~\cite{nishihara2015general} to complete the proof. The detailed proof is deferred to supplementary materials.

\subsection{Software}
We provide the open source python code for implementation of Algorithm~\ref{alg:palm} on \url{https://github.com/summeryingliu/imagereg}.

\section{Theoretical Results}
\label{sec:consistency}

In this section, we analyze the statistical property of the estimator given by \eqref{eq:obj}. We present the result via an oracle inequality on the prediction error of the regression. Before we state the theoretical result, we first introduce two quantities which are widely used in the analysis of sparse recovery problems.
\begin{definition}[Compatibility factor]
Let $D\in \bR^{M \times m}$ be an incidence matrix. The \it{compatibility factor} of $D$ for a set $\emptyset \subsetneq T\subset [m]$ is defined as
\bas{
\kappa_T := \inf_{\theta \in \bR^T} \frac{\sqrt{|T|}\|\theta\|}{\|(\theta D)_T\|_{\ell_1}};\quad \kappa=\inf_{T\subset [m]} \kappa_T
}
\vskip -0.05in
\label{def:compat_factor}
\end{definition}
Compatibility factor gets its name based on the idea that, on the subset of edges indicated by set $T$, we require the $\ell_1$-norm and the $\ell_2$-norm to be somehow compatible. Compared with other conditions used to derive sparsity oracle inequalities, such as restricted eigenvalue conditions or irrepresentable conditions, the compatibility factor greater than 0 is shown to be weaker by \cite{van2009conditions}. More discussion about the relationship between different conditions can be found in \cite{van2009conditions}. For graphs with bounded degree, it is shown that the compatibility condition is always satisfied.

\begin{lemma}[\citet{hutter2016optimal}]
Let $D$ be the incidence matrix of a graph $G$ with maximal degree $d$ and $\emptyset \ne T\subset E$. Then, $ \kappa_T \ge \frac{1}{2\min\{\sqrt{d},\sqrt{|T|}\}}$.
\label{prop:compat}
\end{lemma}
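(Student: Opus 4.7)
The plan is to upper-bound the denominator $\|(\theta D)_T\|_{\ell_1}$ in the definition of $\kappa_T$ by a multiple of $\|\theta\|$, using two complementary bookkeeping arguments that exploit either the degree bound $d$ or the cardinality $|T|$, and keeping whichever is sharper.

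First I would pass from $\ell_1$ to vertex sums by the triangle inequality: writing each edge $e\in T$ as $e=(u_e,v_e)$, the bound $|\theta_{u_e}-\theta_{v_e}|\le |\theta_{u_e}|+|\theta_{v_e}|$ gives
\begin{equation*}
\|(\theta D)_T\|_{\ell_1} \;\le\; \sum_{e=(u,v)\in T}\bigl(|\theta_u|+|\theta_v|\bigr) \;=\; \sum_{v} \deg_T(v)\,|\theta_v|,
\end{equation*}
where $\deg_T(v)$ denotes the degree of vertex $v$ inside the edge subset $T$. A single application of Cauchy--Schwarz then yields $\|(\theta D)_T\|_{\ell_1}\le \|\theta\|\sqrt{\sum_v \deg_T(v)^2}$.

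Next comes the combinatorial step. I would use the elementary bound $\sum_v \deg_T(v)^2 \le \max_v \deg_T(v)\cdot \sum_v \deg_T(v)$ together with (i) the handshake identity $\sum_v \deg_T(v)=2|T|$ and (ii) the twin upper bounds $\max_v \deg_T(v)\le d$ (maximum graph degree) and $\max_v \deg_T(v)\le |T|$ (no vertex meets more edges of $T$ than there are edges in $T$). Combining,
\begin{equation*}
\sum_v \deg_T(v)^2 \;\le\; 2|T|\,\min(d,|T|), \qquad \text{hence} \qquad \|(\theta D)_T\|_{\ell_1} \;\le\; \|\theta\|\,\sqrt{2|T|\min(d,|T|)}.
\end{equation*}

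Plugging this into the definition of $\kappa_T$ and using $\sqrt{\min(a,b)}=\min(\sqrt a,\sqrt b)$ together with $\sqrt 2\le 2$,
\begin{equation*}
\kappa_T \;\ge\; \frac{\sqrt{|T|}\,\|\theta\|}{\|\theta\|\,\sqrt{2|T|\min(d,|T|)}} \;=\; \frac{1}{\sqrt{2\min(d,|T|)}} \;\ge\; \frac{1}{2\min(\sqrt d,\sqrt{|T|})},
\end{equation*}
which is the claim. There is no real hard step — it is two elementary inequalities plus the handshake lemma. The only subtlety to flag is that invoking just the degree bound gives the weaker constant $1/\sqrt{2d}$ and misses the small-$|T|$ regime entirely; one must also track the edge-count cap on $\max_v \deg_T(v)$ to recover the $\min$ structure uniformly in both regimes.
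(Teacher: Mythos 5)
Your proof is correct, and in fact slightly sharper than stated (you obtain the constant $1/\sqrt{2}$ rather than $1/2$). Note that the paper itself gives no proof of this lemma --- it is imported verbatim from \citet{hutter2016optimal} --- and your argument (triangle inequality to reduce to $\sum_v \deg_T(v)|\theta_v|$, Cauchy--Schwarz, then the handshake identity combined with the two caps $\max_v\deg_T(v)\le d$ and $\max_v\deg_T(v)\le|T|$) is essentially the standard one used in that reference, with the minor refinement that you derive both regimes of the $\min$ from a single bound on $\sum_v\deg_T(v)^2$ instead of treating them separately.
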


\begin{definition}[Inverse scaling factor]
The \it{inverse scaling factor} of an incidence matrix $D$ is defined as 
$
\rho(D): =\max_{j\in [m]} \|s_j\|$,
where $S=(D^T)^\dagger=[s_1^T,\cdots, s_m^T]^T$ is the pseudo inverse of $D^T$.
\label{def:inv_scaling_factor}
\end{definition}
By design of $D_v$, it is clear that $\rho(D_v)=\rho(D)$. 
Now we present the main result.
\begin{theorem}[Oracle Inequality for Projected TV Regression]
Under model \eqref{model} with $\epsilon_{i}\stackrel{i.i.d.}{\sim} \cN(0,\sigma^2I_M)$, define $\theta^*=\ttvec(\Gamma^{*T}X^T) $, $\that$ is the solution for \eqref{eq:cons_opt}. For any $\delta>0$, if $\lambda=\rho\sigma\sqrt{\log(mnM/\delta)}$, then with probability at least $1-\delta$,
\bas{
 \|\theta^*-\hat{\theta}\|_F^2 \le & \inf_{\bar{\theta}\in \bR^{n M}: H_v\bar{\theta}=\bar{\theta}} \left\{ \|\bar{\theta}-\theta^*\|^2+ 4\lambda \|(D_v^T \bar{\theta})_{T^c}\|_{\ell_1} \right\} \\
& \quad + 64\sigma^2\log\left( \frac{2enM}{\delta} \right)+ 8\rho^2\sigma^2\log\left(\frac{mnM}{\delta}\right)\kappa_T^{-2}|T|}
\label{th:consist}
\end{theorem}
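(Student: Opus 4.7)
The plan is to run the standard ``basic inequality, pseudo-inverse split, compatibility plus Gaussian concentration'' route for oracle bounds on total-variation estimators, adapted to the feasibility constraint $H_v\theta = \theta$. The first step is to exploit optimality of $\hat\theta$: for any feasible oracle $\bar\theta$, expanding $y = \theta^* + \epsilon$ in the optimality inequality of \eqref{eq:cons_opt} and rearranging gives
\bas{
\|\hat\theta-\theta^*\|^2 \le \|\bar\theta-\theta^*\|^2 + 2\epsilon^T(\hat\theta-\bar\theta) + 2\lambda\bigl(\|D_v^T\bar\theta\|_{\ell_1}-\|D_v^T\hat\theta\|_{\ell_1}\bigr).
}
Since both $\hat\theta$ and $\bar\theta$ lie in $\text{range}(H_v)$ and $H_v$ is a symmetric idempotent, $\epsilon^T(\hat\theta-\bar\theta) = (H_v\epsilon)^T(\hat\theta-\bar\theta)$. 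Writing $I = S_v D_v^T + \Pi_N$, where $S_v = (D_v^T)^\dagger$ and $\Pi_N$ is the orthogonal projection onto the null space of $D_v^T$, then decomposes the cross term into a ``TV-visible'' part $(S_v^T H_v\epsilon)^T D_v^T(\hat\theta-\bar\theta)$ and a ``null-space'' part $(\Pi_N H_v\epsilon)^T(\hat\theta-\bar\theta)$.

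The TV-visible part is controlled by H\"older: it is at most $\|S_v^T H_v\epsilon\|_\infty \|D_v^T(\hat\theta-\bar\theta)\|_{\ell_1}$. Each coordinate of $S_v^T H_v\epsilon$ is Gaussian with variance at most $\rho^2\sigma^2$ (using $H_v^2=H_v$ and Definition~\ref{def:inv_scaling_factor}), so a Gaussian maximum-tail bound and a union bound over $nm$ coordinates produce an event $\mathcal{E}_1$ of probability at least $1-\delta/2$ on which $\|S_v^T H_v\epsilon\|_\infty \le \lambda/2$. Writing $u=\hat\theta-\bar\theta$, splitting $\|D_v^T u\|_{\ell_1}$ along $T$ and $T^c$, and using the reverse triangle inequality on $\|D_v^T\bar\theta\|_{\ell_1}-\|D_v^T\hat\theta\|_{\ell_1}$ bounds the TV contribution by $3\lambda\|(D_v^T u)_T\|_{\ell_1} + 4\lambda\|(D_v^T\bar\theta)_{T^c}\|_{\ell_1} - \lambda\|(D_v^T u)_{T^c}\|_{\ell_1}$; the last term is non-positive and dropped. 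The compatibility factor (Definition~\ref{def:compat_factor}) converts $\|(D_v^T u)_T\|_{\ell_1} \le \kappa_T^{-1}\sqrt{|T|}\|u\|$, and AM-GM splits this into a deterministic $\kappa_T^{-2}|T|\lambda^2$ piece plus a small fraction of $\|u\|^2$. For the null-space part, $\Pi_N H_v\epsilon$ has sub-Gaussian entries of variance at most $\sigma^2$; a Gaussian maximum over the $nM$ coordinates, combined with Cauchy-Schwarz and AM-GM against $u$, yields a deterministic $C\sigma^2\log(2enM/\delta)$ contribution plus another fraction of $\|u\|^2$, on an event $\mathcal{E}_2$ of probability at least $1-\delta/2$.

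To combine, use $\|u\|^2 \le 2\|\hat\theta-\theta^*\|^2 + 2\|\bar\theta-\theta^*\|^2$ to route the $\|u\|^2$ fractions to both sides and pick the AM-GM constants so that the coefficient on $\|\hat\theta-\theta^*\|^2$ on the right is strictly less than one, allowing absorption. On $\mathcal{E}_1\cap\mathcal{E}_2$ (probability at least $1-\delta$) and taking the infimum over $\bar\theta$, this produces the bound with the stated multiplicative constants $4$, $8$, and $64$. The main obstacle is the null-space contribution: a naive Cauchy-Schwarz against $\|\Pi_N H_v\epsilon\|$ gives a $p$-dimensional $\chi^2$ term of order $\sigma^2 p$, which does not match the $\sigma^2\log(nM)$ rate in the statement. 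Obtaining the sharper form seems to require staying in $\ell_\infty/\ell_1$ duality and exploiting the Kronecker structure $H_v = H\otimes I_M$ together with the per-subject-constant structure of the null space of $D_v^T$; pinning down the AM-GM constants so that all $\|\hat\theta-\theta^*\|^2$ pieces sum to strictly less than one is the delicate bookkeeping that pins down the constants in the final inequality.
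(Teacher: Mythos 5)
Your overall architecture matches the paper's: a basic inequality from optimality of $\hat\theta$, a split of the cross term $\epsilon^T(\hat\theta-\bar\theta)$ through the pseudo-inverse of the incidence matrix into a TV-visible part and a $\ker(D_v^T)$ part, H\"older plus a Gaussian maximal inequality at scale $\rho\sigma$ for the former, and the triangle inequality, compatibility factor and Young's inequality to close. The genuine gap is the one you flag yourself and do not resolve: the kernel term. Treated globally, $\Pi_N H_v\epsilon$ gives either a $\chi^2$-type contribution via Cauchy--Schwarz or an $\ell_\infty$ bound that cannot be paired with $\|u\|_2$, and neither produces the $64\sigma^2\log(2enM/\delta)$ term. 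The paper's device is exactly the Kronecker structure you gesture at, but exploited subject-by-subject rather than on the stacked vector: since $D_v=I_n\otimes D$, we have $\ker(D_v^T)=\bR^n\otimes\ker(D^T)$, and for a connected penalty graph $\ker(D^T)=\ttspan\{1_M\}$ is one-dimensional. Hence $(I-(D^\dagger)^TD^T)\epsilon_i$ is the projection of $\epsilon_i$ onto a single direction, its norm is $\sigma$ times the modulus of one standard Gaussian, and a union bound over the $n$ subjects controls $\max_i\|(I-(D^\dagger)^TD^T)\epsilon_i\|$ at scale $\sigma\sqrt{\log(nM/\delta)}$. Writing $(\hat\theta-\bar\theta)^T\epsilon=\sum_i(\hat\theta_i-\bar\theta_i)^T\epsilon_i$, pairing this maximum with $\|\hat\theta_i-\bar\theta_i\|$ per subject, and only then applying Young's inequality is what yields the stated logarithmic term; note the paper never inserts $H_v$ into the noise at all.

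Two smaller points. First, the paper's basic inequality is obtained from the KKT/variational characterization and keeps $\|\bar\theta-\hat\theta\|^2$ on the left via the polarization identity, so the two $\tfrac12\|\hat\theta-\bar\theta\|^2$ pieces produced by Young's inequality cancel exactly against it; your plan of expanding $\|u\|^2\le 2\|\hat\theta-\theta^*\|^2+2\|\bar\theta-\theta^*\|^2$ and absorbing a fraction of $\|\hat\theta-\theta^*\|^2$ into the left-hand side is workable in principle but rescales everything and will not reproduce the constants $4$, $8$, $64$. Second, your event $\mathcal{E}_1$ with threshold $\lambda/2$ does not match the stated calibration $\lambda=\rho\sigma\sqrt{\log(mnM/\delta)}$ against the maximal-inequality bound $\rho\sigma\sqrt{2\log(2emnM/\delta)}$; the paper's $T$/$T^c$ bookkeeping is arranged so that the $\|(D_v^Tu)_{T^c}\|_{\ell_1}$ term is cancelled by the penalty difference without needing the noise level to be half of $\lambda$.
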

The proof of Theorem~\ref{th:consist} is inspired by that in \cite{hutter2016optimal} and is deferred to Section~\ref{sec:proof_consistent} in supplementary materials.
\begin{remark}
The length of $\theta$ scales as $nM$. Hence we care about the mean recovery error $\frac{1}{nM}\| \theta^*-\hat{\theta} \|^2$, which converges at rate $O\left( \frac{\log(mnM)}{nM} \right)$. The upper bound exhibits the trade-off between two quantities: the number of changing point $|T|$ and the total variation of the ``smooth" part  $\|(D\theta)_{T^c}\|_{\ell_1}$. To be more specific, given the total variation of the entire data fixed, when the model is piecewise smooth but have drastic change at the changing point, $|T|$ will dominate $\|(D\theta)_{T^c}\|_{\ell_1}$; on the other hand, if the data has few changing point but fluctuate a lot in each piece, the total variation of $\theta$ in $T^c$ will be large compared to $|T|$.
\end{remark}

If the graph has bounded degree, by Lemma~\ref{prop:compat} we immediately have the following corollary.
\begin{corollary}
If the maximal degree of the penalty graph $G$ is $d$, $\lambda=\rho\sigma\sqrt{\log(mnM/\delta)}$, then with probability at least $1-\delta$,
\bas{
\|\theta^*-\hat{\theta}\|_F^2 \le&  \inf_{\bar{\theta}\in \bR^{n M}: H_v\bar{\theta}=\bar{\theta}} \left\{ \|\bar{\theta}-\theta^*\|_F^2+ 4\lambda \|(D_v^T\bar{\theta})_{T^c}\|_{\ell_1} \right\} \\
&\qquad + 64\sigma^2\log\left( \frac{2enM}{\delta} \right)+ \frac{2\rho^2\sigma^2\log\left(\frac{mnM}{\delta}\right)|T|}{\min\{d,|T|\}}
}
\end{corollary}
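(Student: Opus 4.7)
The plan is to adapt the oracle-inequality argument of Hutter and Rigollet (2016) for TV-regularized estimators to the additional subspace constraint $H_v\theta=\theta$ of problem~\eqref{eq:cons_opt}. I would begin from the optimality of $\hat\theta$: for any $\bar\theta$ with $H_v\bar\theta=\bar\theta$, comparing the objective at $\hat\theta$ and at $\bar\theta$, substituting $y=\ttvec(Y^T)=\theta^*+\epsilon$, and rearranging, yields the basic inequality
\[
\tfrac{1}{2}\|\hat\theta-\theta^*\|^2 \le \tfrac{1}{2}\|\bar\theta-\theta^*\|^2 + \langle\epsilon,\,\hat\theta-\bar\theta\rangle + \lambda\bigl(\|D_v^T\bar\theta\|_{\ell_1}-\|D_v^T\hat\theta\|_{\ell_1}\bigr).
\]
Because both $\hat\theta$ and $\bar\theta$ lie in $\text{range}(H_v)$, their difference equals $H_v(\hat\theta-\bar\theta)$, so by symmetry of $H_v$ the cross term becomes $\langle H_v\epsilon,\hat\theta-\bar\theta\rangle$. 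Introducing $S=(D_v^T)^\dagger$ and the orthogonal projector $\Pi:=SD_v^T$ onto $\ker(D_v^T)^\perp$, the identity $v=\Pi v+(I-\Pi)v$ decomposes the cross term into a \emph{rough} piece $\langle S^T H_v\epsilon,\,D_v^T(\hat\theta-\bar\theta)\rangle$ and a \emph{smooth} piece $\langle (I-\Pi)H_v\epsilon,\,\hat\theta-\bar\theta\rangle$.

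Gaussian concentration controls both pieces. Each coordinate of $S^T H_v\epsilon$ is centered Gaussian with variance bounded by $\rho(D_v)^2\sigma^2=\rho^2\sigma^2$ (using $\|s_j\|\le\rho$ and the fact that $H_v$ is a contraction); a maximum inequality over the $nm$ coordinates therefore gives $\|S^T H_v\epsilon\|_\infty \lesssim \rho\sigma\sqrt{\log(mnM/\delta)}$ with probability at least $1-\delta/2$, which motivates the calibration $\lambda=\rho\sigma\sqrt{\log(mnM/\delta)}$. The rough piece is then dispatched by Hölder's inequality. For the smooth piece I would apply Cauchy–Schwarz followed by an AM-GM split $ab\le\tfrac{1}{2\alpha}a^2+\tfrac{\alpha}{2}b^2$, then use $\|\hat\theta-\bar\theta\|^2 \le 2\|\hat\theta-\theta^*\|^2+2\|\bar\theta-\theta^*\|^2$ to absorb part of the quadratic back into the left-hand side; what remains is $\|(I-\Pi)H_v\epsilon\|^2$, a Gaussian quadratic form on the subspace $\ker(D_v^T)\cap\text{range}(H_v)$, which concentrates via $\chi^2$-type tail bounds and yields the additive $64\sigma^2\log(2enM/\delta)$ term.

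The penalty difference is handled with a standard support decomposition: for any $T\subset[m]$, the reverse triangle inequality yields
\[
\|D_v^T\bar\theta\|_{\ell_1}-\|D_v^T\hat\theta\|_{\ell_1} \le \|(D_v^T(\hat\theta-\bar\theta))_T\|_{\ell_1}+2\|(D_v^T\bar\theta)_{T^c}\|_{\ell_1}-\|(D_v^T(\hat\theta-\bar\theta))_{T^c}\|_{\ell_1}.
\]
On the event $\lambda\ge\|S^T H_v\epsilon\|_\infty$, the $T^c$ contributions from the rough piece and from the penalty difference combine so that the signed $T^c$ terms cancel, leaving only the $T$-restricted term, which Definition~\ref{def:compat_factor} bounds by $\kappa_T^{-1}\sqrt{|T|}\,\|\hat\theta-\bar\theta\|$; a final AM-GM step converts this linear-in-$\|\hat\theta-\bar\theta\|$ expression into the additive $8\rho^2\sigma^2\log(mnM/\delta)\,\kappa_T^{-2}|T|$ term plus a fraction of $\|\hat\theta-\theta^*\|^2$ that is absorbed as before. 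The main obstacle will be threading the several AM-GM constants so that every $\|\hat\theta-\theta^*\|^2$ remainder absorbs cleanly while producing the exact numerical prefactors in the statement (notably the coefficient $4$ in front of the oracle bias $\|(D_v^T\bar\theta)_{T^c}\|_{\ell_1}$), together with securing a purely logarithmic bound on the smooth-noise term $\|(I-\Pi)H_v\epsilon\|^2$ despite the ambient dimension of $\ker(D_v^T)$, which I would obtain by working in an orthonormal basis of $\ker(D_v^T)\cap\text{range}(H_v)$ and applying a Gaussian maximal inequality as in Section 2 of the Hutter–Rigollet baseline.
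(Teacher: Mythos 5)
Your proposal is, in substance, a sketch of the proof of Theorem~\ref{th:consist} (the general oracle inequality), and as such it tracks the paper's appendix argument closely: the basic inequality from optimality of $\hat\theta$, the split of the noise term via the pseudoinverse $(D_v^T)^\dagger$ into a rough piece (Gaussian maximal inequality, calibrating $\lambda=\rho\sigma\sqrt{\log(mnM/\delta)}$) and a smooth piece supported on $\ker(D_v^T)$, the support decomposition of the penalty by the triangle inequality, the compatibility factor to control the $T$-restricted term, and Young's inequality to absorb the remainders. But it stops exactly where this corollary begins. The bound you arrive at still contains $8\rho^2\sigma^2\log(mnM/\delta)\,\kappa_T^{-2}|T|$, i.e.\ it is Theorem~\ref{th:consist} itself. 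The entire content of the corollary, relative to that theorem, is the replacement of the abstract compatibility factor $\kappa_T$ by an explicit quantity depending on the maximal degree $d$ of the penalty graph. You never use the hypothesis that $G$ has maximal degree $d$, and you never invoke Lemma~\ref{prop:compat} (the bound $\kappa_T \ge \frac{1}{2\min\{\sqrt{d},\sqrt{|T|}\}}$ from \citet{hutter2016optimal}); the paper's proof of the corollary consists precisely of citing Theorem~\ref{th:consist} together with that lemma. So the one step that constitutes the corollary is missing from your plan.

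If you do add that step, be careful about the direction of the substitution: Lemma~\ref{prop:compat} is a \emph{lower} bound on $\kappa_T$, hence it yields $\kappa_T^{-2}\le 4\min\{d,|T|\}$ and an additive term of the form $32\rho^2\sigma^2\log(mnM/\delta)\,|T|\min\{d,|T|\}$, whereas the displayed statement has $\frac{2\rho^2\sigma^2\log(mnM/\delta)\,|T|}{\min\{d,|T|\}}$, which would require the reverse bound $\kappa_T^{-2}\le \frac{1}{4\min\{d,|T|\}}$ that Lemma~\ref{prop:compat} does not supply. Your write-up should make explicit which inequality on $\kappa_T$ is being used and in which direction it propagates into the final constant.
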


In particular, if the features has isotropic covariance matrix, then we will have the following bound on the parameter estimation.
\begin{corollary}
If  $\frac{1}{n}X^TX=I_p$, and $\lambda=\rho\sigma\sqrt{\log(mnM/\delta)}$, then with probability at least $1-\delta$,
\bas{
 \frac{1}{Mp}\|\hat{\Gamma}-\Gamma^*\|_F^2 \le &  \inf_{\bar{\Gamma}\in \bR^{p\times M}} \left( \frac{1}{Mp}\|\bar{\Gamma}-\hat{\Gamma}\|_F^2+ \frac{4\lambda}{ nMp} \|(X\bar{\Gamma}D)_{T^c}\|_{\ell_1} \right)  \\
&\quad + \frac{64\sigma^2\log\left( \frac{2enM}{\delta}\right) }{nMp }+ \frac{8\rho^2\sigma^2\log\left(\frac{mnM}{\delta}\right)|T| }{nMp\kappa_T^2}
}
\label{cor:gamma_oracle}
\end{corollary}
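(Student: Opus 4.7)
The plan is to derive this corollary as a direct change of variables from Theorem~\ref{th:consist}, specializing the general bound on $\|\theta^*-\hat\theta\|_F^2$ to the parameter estimation error $\|\hat\Gamma-\Gamma^*\|_F^2$ using the isotropy condition $\frac{1}{n}X^TX = I_p$. First I would invoke Lemma~\ref{lem:theta2gamma} to identify $\hat\theta = \ttvec((X\hat\Gamma)^T)$, so that
\[
\|\theta^*-\hat\theta\|_F^2 \;=\; \|\ttvec((X(\Gamma^*-\hat\Gamma))^T)\|_F^2 \;=\; \|X(\Gamma^*-\hat\Gamma)\|_F^2.
\]
The isotropy assumption then gives $\|X A\|_F^2 = \tr(A^T X^T X A) = n\|A\|_F^2$ for any $A\in\bR^{p\times M}$, so the left-hand side of Theorem~\ref{th:consist} becomes exactly $n\|\Gamma^*-\hat\Gamma\|_F^2$.

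Next I would parametrize the feasible set in the infimum. The step is to show that $\{\bar\theta \in \bR^{nM}: H_v\bar\theta=\bar\theta\}$ is precisely $\{\ttvec((X\bar\Gamma)^T):\bar\Gamma\in\bR^{p\times M}\}$. Using the identity $(H\otimes I_M)\ttvec(C) = \ttvec(CH)$ and the fact that rows of $\text{mat}(\bar\theta)$ must lie in the column span of $X$ (which has dimension $p$ since $\text{rank}(X)=p$), every such $\bar\theta$ is uniquely of the form $\ttvec(\bar\Gamma^T X^T)$. For such $\bar\theta$, repeating the isotropy argument yields $\|\bar\theta-\theta^*\|_F^2 = n\|\bar\Gamma - \Gamma^*\|_F^2$, and the Kronecker identity $(I_n\otimes D^T)\ttvec(\bar\Gamma^T X^T) = \ttvec(D^T\bar\Gamma^T X^T) = \ttvec((X\bar\Gamma D)^T)$ shows that $\|(D_v^T\bar\theta)_{T^c}\|_{\ell_1} = \|(X\bar\Gamma D)_{T^c}\|_{\ell_1}$ (after identifying the index set $T^c$ through the vectorization).

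Substituting these identifications into Theorem~\ref{th:consist} gives
\[
n\|\hat\Gamma-\Gamma^*\|_F^2 \;\le\; \inf_{\bar\Gamma}\Bigl\{ n\|\bar\Gamma-\Gamma^*\|_F^2 + 4\lambda\|(X\bar\Gamma D)_{T^c}\|_{\ell_1} \Bigr\} + 64\sigma^2\log\!\tfrac{2enM}{\delta} + 8\rho^2\sigma^2\log\!\tfrac{mnM}{\delta}\kappa_T^{-2}|T|,
\]
and dividing through by $nMp$ produces the claimed inequality (with the understanding that the $\hat\Gamma$ appearing inside the infimum in the statement is a typo for $\Gamma^*$).

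I do not expect a serious obstacle here: the work is bookkeeping with Kronecker products and the identification of feasible $\bar\theta$ with matrices $\bar\Gamma$. The only place that requires minor care is checking that the index set $T$ used in the oracle inequality for $D_v^T\bar\theta \in \bR^{nm}$ descends, via the vectorization of $X\bar\Gamma D \in \bR^{n\times m}$, to a compatible subset for the corollary's statement; once this correspondence is fixed, the compatibility factor $\kappa_T$ and inverse scaling factor $\rho$ carry over unchanged (the latter by the observation $\rho(D_v)=\rho(D)$ noted earlier in the paper), and the probabilistic statement is inherited verbatim from Theorem~\ref{th:consist} with the same choice of $\lambda$.
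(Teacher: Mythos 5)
Your proposal is correct and follows essentially the same route as the paper: both reduce the corollary to Theorem~\ref{th:consist} via the identification $\hat\theta=\ttvec((X\hat\Gamma)^T)$ from Lemma~\ref{lem:theta2gamma}, use the isotropy condition $\frac{1}{n}X^TX=I_p$ to get $\|\theta^*-\hat\theta\|^2=n\|\hat\Gamma-\Gamma^*\|_F^2$, reparametrize the infimum over feasible $\bar\theta$ by $\bar\Gamma$, and divide by $nMp$. Your version is somewhat more explicit about the Kronecker-product bookkeeping, and you correctly flag the same $\|\bar\Gamma-\hat\Gamma\|$ versus $\|\bar\Gamma-\Gamma^*\|$ typo that also appears in the paper's own derivation.
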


\begin{remark}
The condition in Corollary~\ref{cor:gamma_oracle} can be achieved by normalizing the input design matrix. When the covariance matrix for the features is not isotropic, the error term should be represented in Mahalanobis distance instead, that is, the error along different axis needs to be reweighted by the variance in that direction.  
\end{remark}

\section{Simulation Studies}
\label{sec:simu}

In this section, we present the simulation results on recovering signal $\Gamma$ with various methods. Since the goal is to accurately estimate $\Gamma$, we use the mean devision (square root of the mean squared error) of the coefficients as the performance metric, which is defined as $\frac{1}{\sqrt{Mp}}\|\hat{\Gamma}-\Gamma^*\|_F$. The standard deviations of the metric over replicates are reported in the parenthesis.

We conduct experiments on 1D and 2D synthetic data. For 1D data, we compare our method with the state-of-the-art function-on-scalar regression with B-spline and Fourier basis, which is implemented in R package \texttt{refund}; 
The setting and results are summarized in Section~\ref{sec:1dsimu}. For 2D data, the function-on-scalar method is applied on the vectorized image. We also compare with 
Parsimonious Tensor Response Regression~\citep{li2017parsimonious}, and some two-step procedures which we discuss in Section~\ref{sec:2dsimu}. 

\subsection{One Dimensional Simulation}
\label{sec:1dsimu}
We generate 2 simulation scenarios, each with 200 replications. For each scenario, we present results for sample sizes of 25, 50 and 100. 
In the first setting we generate model that satisfies the assumptions of function-on-scalar regression, i.e., the signals are sparse if expanded in Fourier basis. The second setting is generated not following this model assumption.

For the first setting, the predictors are generated as following: $(X_1, X_2)$ are drawn from categorical distribution that takes value $(1,0)$ with probability $1/4$, $(0,1)$ with probability $1/4$, and $(0,0)$ with probability $1/2$. $X_3$ is drawn from standard normal distribution. The true signals are from Fourier basis functions, with index $t$ ranges from $1$ to $200$, and $$Y=0.3 \sin (\pi t/100)+0.5 X_1 \cos (\pi t/100)-0.3 X_2 \sin (\pi t/50)+ 0.5 \cos(\pi t/25) + 2 \mathcal{N}(0,1).$$ We compared our proposed method with Functional-on-Scalar regression methods (FoSR) using B-spline and Fourier basis functions with $L_2$ penalty, where the tuning parameter is selected through cross validation from a grid of $(2^{-5},2^{-3},2^{-1},2,2^3,2^5)$. 
It is noticeable that scenario 1 is in favor of the FoSR methods since it is generated from the true model assumption with Fourier basis.  The Parsimonious Tensor Response Regression~\citep{li2017parsimonious} is proposed only for tensor regression problem and not intended for 1D function-on-scalar regression problem.

Table \ref{1Dsimu} presents the simulation results for scenario 1. 
When sample size is $25$, the proposed method shows a better performance than all the competitors in terms of smaller mean and standard devision. Our method performs similarly with the FoSR methods when sample size is 50. With large sample size 100, the FoSR method with the correct basis has the best performance. The Fourier basis function performs slightly better then b-spline basis, however the difference is not phenomenal. 

\begin{table}[ht]
\centering
\footnotesize
\caption{Mean Deviation of  Coefficients for 1D settings}
\begin{tabular}{l|p{1cm}|llll}
  \hline
& sample size	& TVTR 		&periodic\_TVTR			& FoSR\_bspline 		& FoSR\_fourier 	\\ 
  \hline
\multirow{3}{*}{Setting 1}  	&25 		&{\bf 0.045(0.008) }	& - &0.054(0.015) 	& 0.053(0.015) 		 \\ 
  					&50 		& 0.024(0.004) 		& - &0.024(0.006) 	& 0.024(0.006) 		 \\ 
					&100 	& 0.015(0.002) 		& - &0.012(0.003) 	&{\bf 0.011(0.003)} 	 \\ 
   \hline
 \multirow{3}{*}{Setting 2}	&25 		& 0.076(0.009) 		& {\bf 0.033(0.007)} 	& 0.081(0.015) & 0.078(0.015)\\ 
 					&50 		& 0.041(0.004) 		&{\bf 0.020(0.004)} 	& 0.051(0.006) & 0.049(0.006)  \\ 
					&100 	& 0.020(0.002) 		& {\bf 0.012(0.002)}	& 0.039(0.003) & 0.037(0.004) \\ 
   \hline
\end{tabular}
\label{1Dsimu}
\end{table}
The second simulation setting showed the advantage of our methods where the function on scalar regression model are misspecified. It also demonstrates the advantage of our proposed method in the flexibility for defining arbitrary adjacency structure with self defined incidence matrix.  
We generate piecewise constant signals, and in addition the signals for indexes 1 to 100 are identical with signals in 101:200. This setting is motivated by the scenario in time series observations, where one may have observations of the same pattern in repeated time periods.  For example, periodicity that correspond to day of the week or season of the year might be observed in subjects' levels of physical activity. Similar brain activation time series might also manifest during a task fMRI experiment when the stimulations were given repeatedly. We used this 1D simulation to demonstrate the performance of incorporating some prior knowledge compared with not using this knowledge. 

We consider two regimes in defining the edges. For vanilla TVTR, we only consider edges connecting adjacent observations: $E=\{(i,i+1), i=1,\dots,199\}$. For the periodic\_TVTR, we add in 100 more edges that connects the $i$th and $100+i$th observations, which encourages the periodicity of the signal.
Here the edge set is $E=\{(i,i+1), i=1,\dots,199\} \cup \{(i,i+100),i=1,\dots,100\}$. 
The true signal is generated by the following model:
\bas{
&Y_{ij} = \mathbb{I}(1\le j\le 20)+ \mathbb{I}(101\le j \le 120)+ 0.5 X_{1i} \left(\mathbb{I} (31\le j \le 70)+\mathbb{I}(131\le j \le170)\right) \\
& -X_{2i} \left(\mathbb{I}(71 \le j \le 80)+\mathbb{I}(171 \le j \le 180)\right) + X_{3i} \left(\mathbb{I}(61 \le j \le 100)+\mathbb{I}(161 \le j \le 200)\right) + 2\mathcal{N}(0,1).
}
The results are shown in Table \ref{1Dsimu}. 
We can see TVTR with or without the added edges outperforms the other methods. As sample size increases all the methods improves with smaller mean and SD for the mean deviation. By adding the additional edges, our proposed method shows better performance, and the advantage is more phenomenal with small sample size. The proposed method with added edges to incorporate prior knowledge and encourage similar patterns reduced the mean deviation to less than half for the one estimated  without added edges. 

\subsection{Two Dimensional Simulation}
\label{sec:2dsimu}
\begin{table}
\centering
\footnotesize
\caption{Mean Deviation of Coefficients for 2D settings}
\vskip 0.1in
\begin{tabular}{c|p{1cm}|lllll}
\hline
		& sample size	& TVTR		& Envelope 	& FoSR	& TV\_OLS	&OLS\_TV \\
		\hline
Setting 1	& 25			& {\bf 0.298 (0.038)} 	& 0.349(0.028)		& 0.357 (0.019) 		& 0.501 (0.044)	 	& 0.424 (0.048)\\
		& 50			& {\bf 0.217 (0.022)} 	& 0.241(0.013)		&0.339 (0.010)			&0.361 (0.028)	 	&0.313(0.033)\\
		& 100		& 0.200 (0.016)		& {\bf 0.169(0.007)}	&0.330 (0.004) 			&0.292 (0.017)	 	&0.230 (0.013)\\
		\hline
Setting 2	& 25			&{\bf 0.265 (0.008)} 	& 0.561(0.045)		&0.321 (0.003) 			& 0.275 (0.006) 	& 0.919 (0.087)\\
		& 50			& {\bf 0.192 (0.007)}	& 0.387(0.019)		&0.318 (0.001) 			& 0.266 (0.003) 	& 0.545 (0.033)\\
		& 100		& {\bf 0.126 (0.006)} 	& 0.267(0.010)		&0.318 (0.001) 			&0.259 (0.003)		& 0.277 (0.012)
\\\hline
\end{tabular}
\label{table2}
\end{table}
In this section, we present the experimental results for 2D synthetic data. All tuning parameters in the proposed method are chosen by 4-fold cross validation with a grid search in $[0.1, 0.25, 0.5, 1, 1.5, 2, 3]$. 
Some methods such as the one proposed in~\citet{reiss2010fast} fail to adapt to the scale we consider in this simulation due to the large memory cost. We demonstrate the performance of the function-on-scalar framework using penalized splines~\citep{goldsmith2016assessing}, with an efficient variational Bayes implementation (\texttt{bayes\_fosr} in R package \texttt{refund}). For envelope method proposed by~\citet{li2017parsimonious}, we use the matlab package provided by the authors.
We also consider two two-stage methods: TV-smooth before and after the regression, where the regression model is the voxel-wise ordinary linear regression. 

All methods used in the simulation are summarized as following.
\begin{itemize}
\item \texttt{TVTR}: Total Variation regularized Tensor-on-scalar Regression by Algorithm \ref{alg:palm}.
\item \texttt{Envelope}: Parsimonious Tensor Response Regression by \cite{li2017parsimonious};
\item  \texttt{FoSR}: A variational Bayes implementation for penalized splines.
\item \texttt{TV\_OLS} : Each image is denoised individually by total variation regularization, and the estimator is achieved by conducting a voxel-wise OLS on the denoised images.
\item  \texttt{OLS\_TV}: A TV denoised estimator from voxel-wise OLS regression.
\end{itemize}

\begin{figure}[t]
\small
\begin{tabular}{cc}
\hspace{-1em}\includegraphics[width=0.52\textwidth]{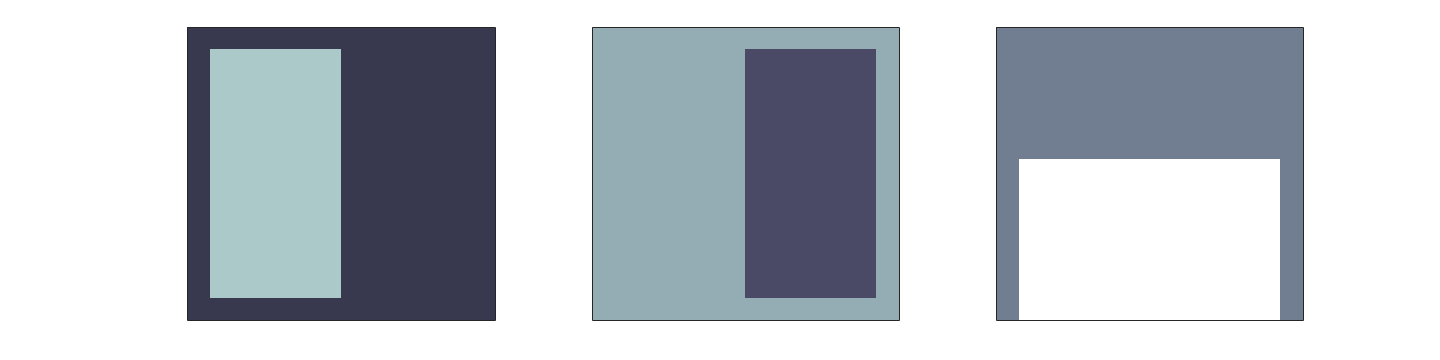} \hspace{-2em}	 	&\includegraphics[width=0.52\textwidth]{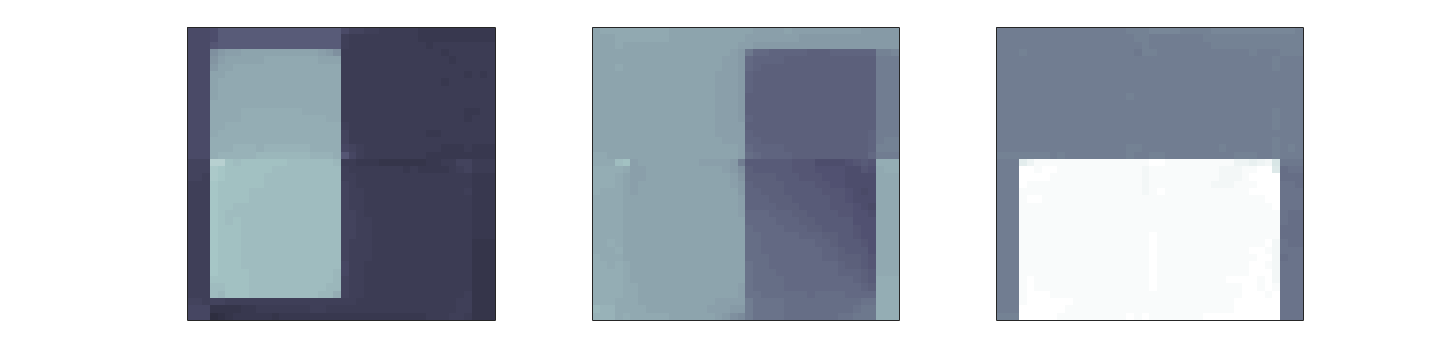}	\\
(a) Ground truth & (b) TVTR\\
\hspace{-1em}\includegraphics[width=0.52\textwidth]{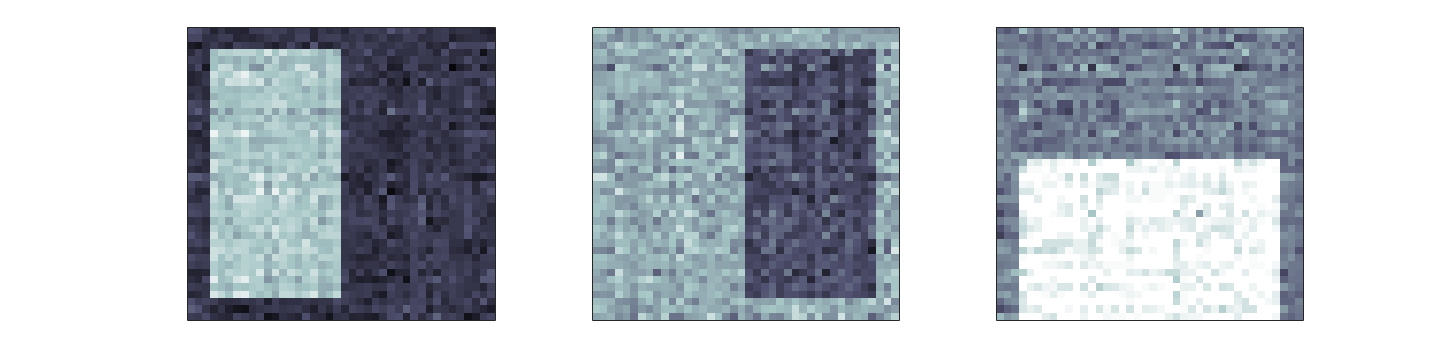} \hspace{-2em}	 	&\includegraphics[width=0.52\textwidth]{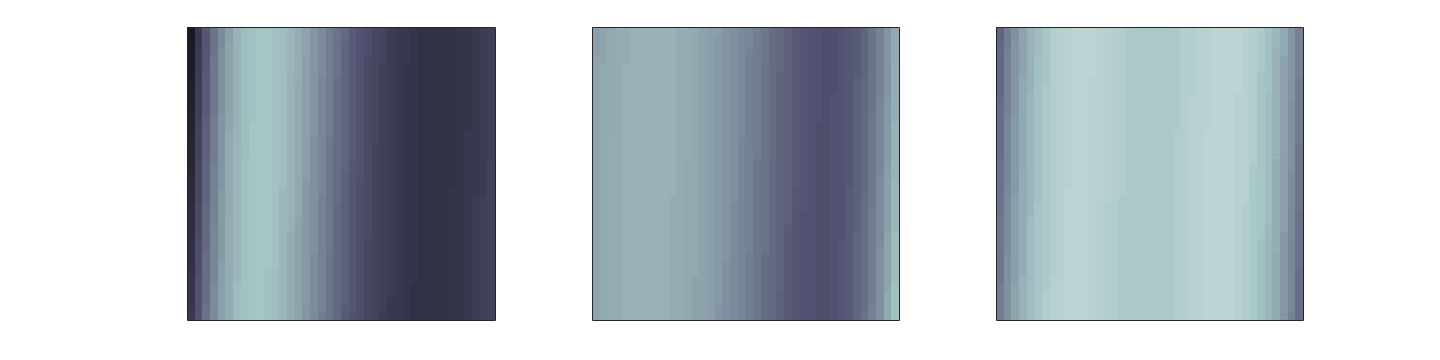}	\\
(c) Envelope & (d) FoSR\\
\hspace{-1em}\includegraphics[width=0.52\textwidth]{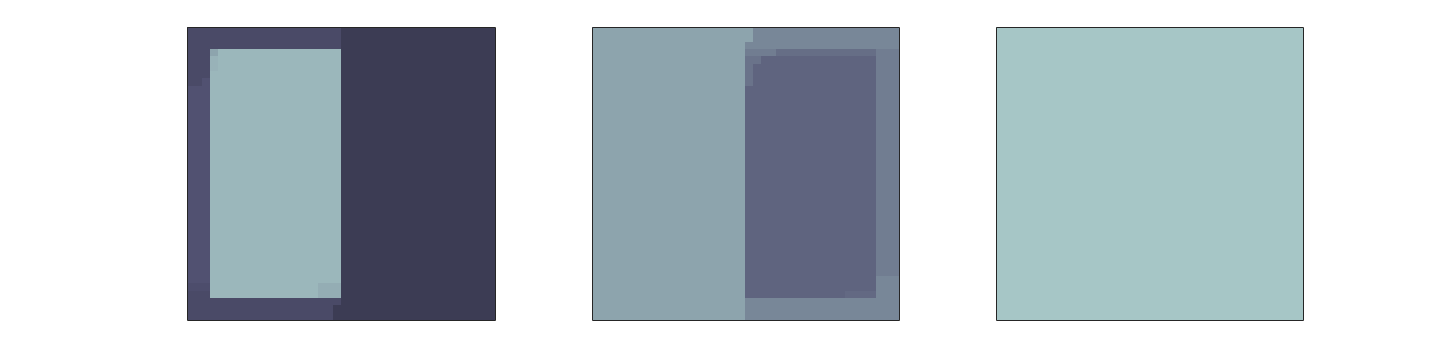} \hspace{-2em} 	&\includegraphics[width=0.52\textwidth]{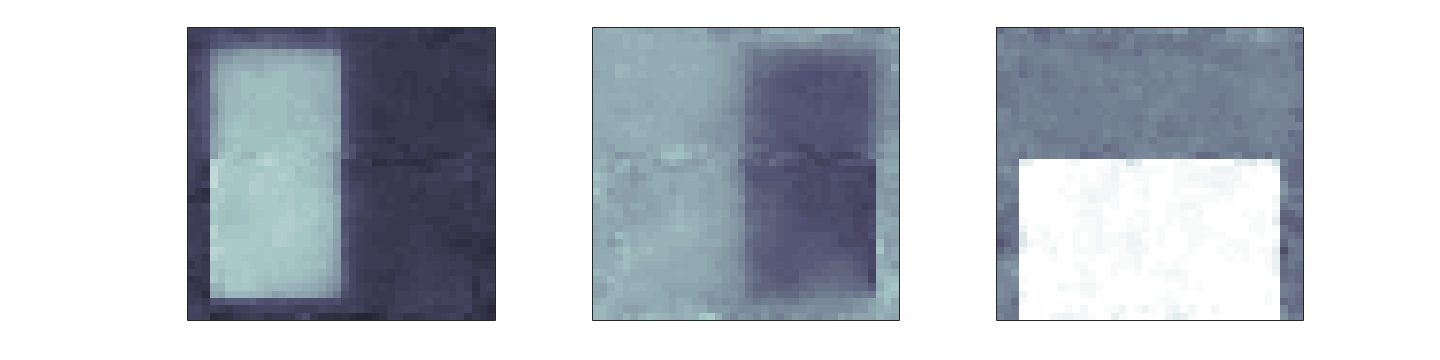}	\\
(e) Two step: OLS\_TV & (f) Two step: TV\_OLS
\end{tabular}
\vspace{1cm}
\caption{Coefficient Maps for Simulation Setting 1 in 2D.}
\label{coefexample1}
\end{figure}

\begin{figure}[t]
\small
\begin{tabular}{cc}
\includegraphics[width=0.44\textwidth]{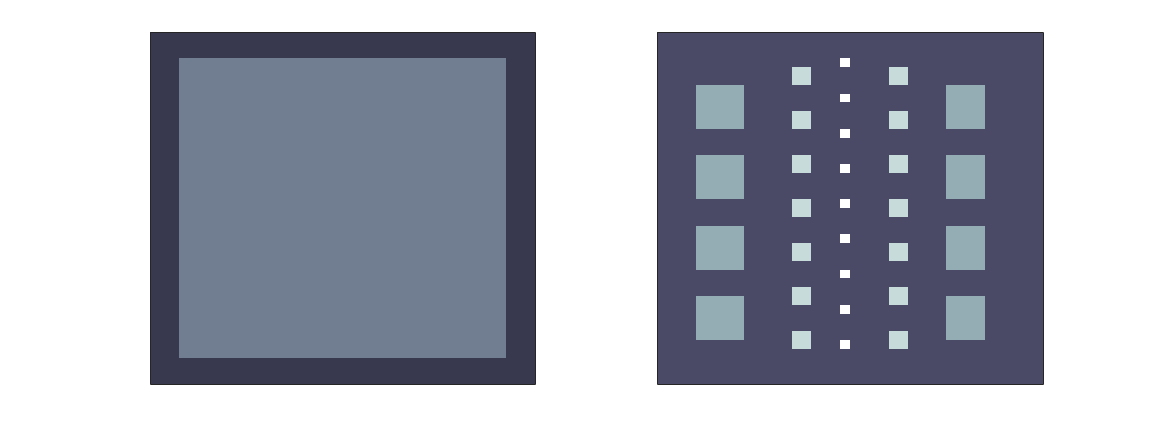}		&\includegraphics[width=0.44\textwidth]{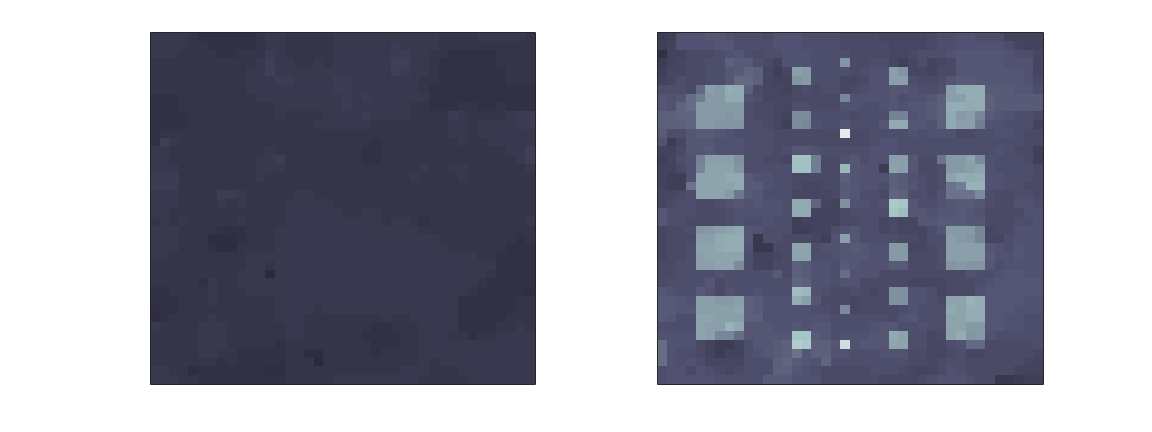}	\\
(a) Ground truth & (b) TVTR\\
\includegraphics[width=0.44\textwidth]{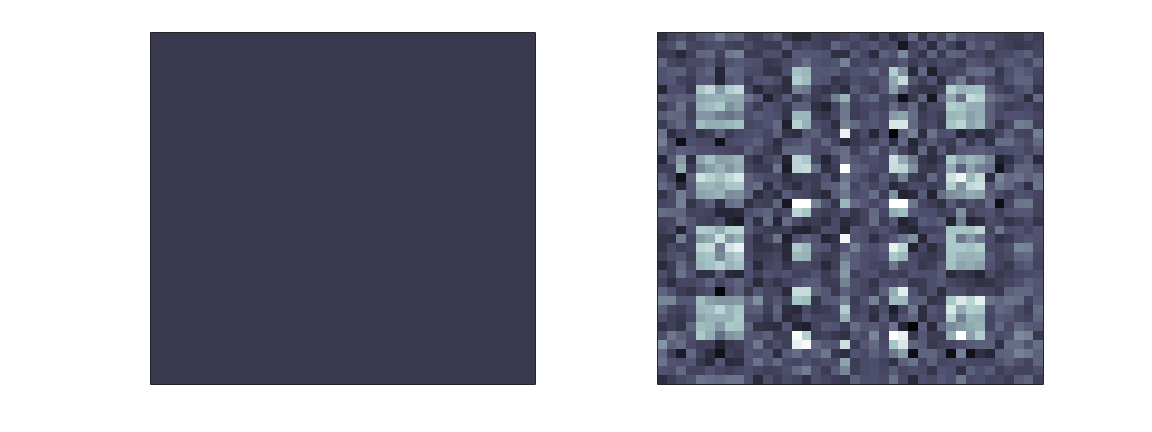}		&\includegraphics[width=0.44\textwidth]{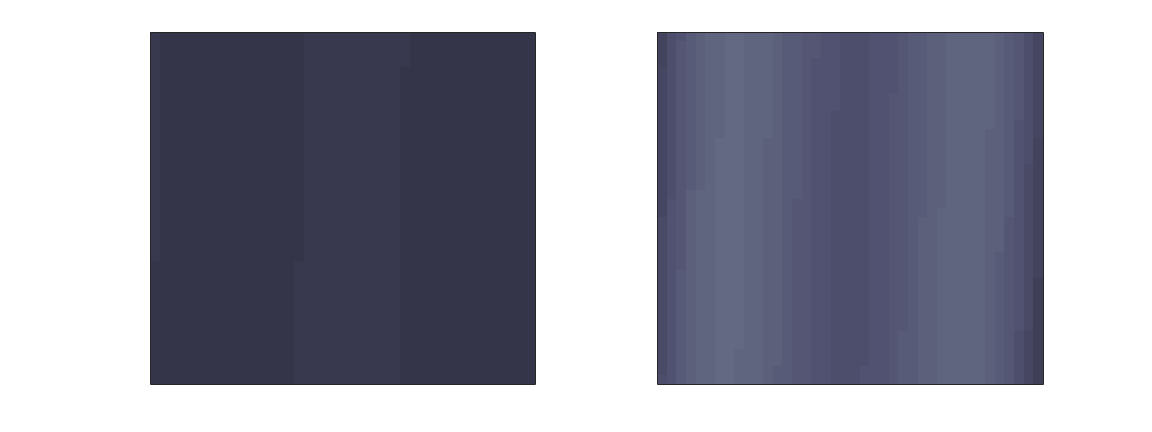}	\\
(c) Envelope & (d) FoSR\\
\includegraphics[width=0.44\textwidth]{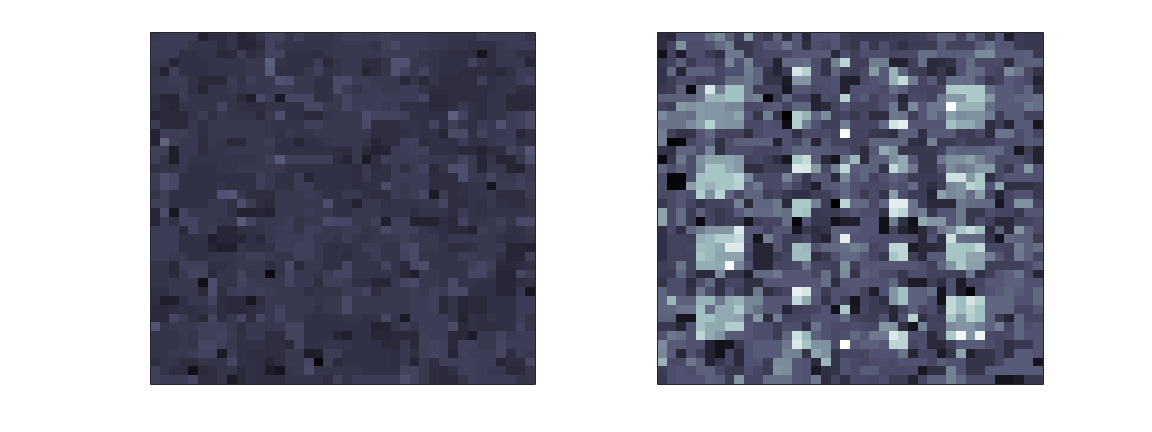}		&\includegraphics[width=0.44\textwidth]{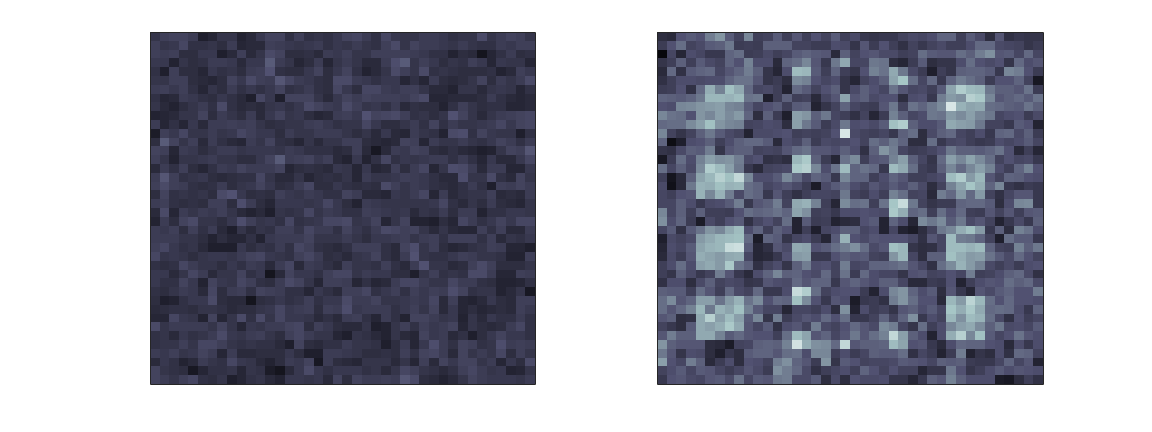}	\\
(e) Two step: OLS\_TV & (f) Two step: TV\_OLS
\end{tabular}
\vspace{1cm}
\caption{Coefficient Maps for Simulation Setting 2 in 2D.}
\label{coefexample2}
\end{figure}

We consider two different settings in the simulation. 

In the first setting, we simulate 3 predictors as indicators of 3 disease groups and an integer value predictor as age. We generated 3 patient-level variables: $X_1$ and $X_2$ are disease group dummy variables, $X_{1}=0, X_{2}=0$  denote  the control group, $X_{1}=1, X_{2}=0$ represent the disease group 1, and $X_{1}=0, X_{2}=1$ represent the  disease group 2. A integer value variable $X_3$ is generated uniformly over integers from $56$ to $75$.
The image size is $m_1=40$ by $m_2=40$,  the true coefficient maps ($\Gamma_1,\Gamma_2,\Gamma_3$) are block-wise constant (as shown in top left panel of Figures~\ref{coefexample1}). And the outcome is generated by 
\bas{
Y = X_1\Gamma_1 + X_2\Gamma_2 + X_3\Gamma_3+\epsilon
}
where $\epsilon\in \bR^{m_1\times m_2}$, and $\epsilon_{ij} \stackrel{i.i.d.}{\sim} \cN(0,2)$.

In the second setting, two binary variables $X_1$ and $X_2$ are generated same as in setting 1. The coefficient map of $X_1$ is generated to have active regions with different sizes: $1$, $4$,  $25$ pixels, and the true coefficients are $2$, $1.5$ and $1$, respectively. Please refer to Figure \ref{coefexample2} for the actual arrangements of these block-wise true signals.

In Figure~\ref{coefexample1} and~\ref{coefexample2}, we present one example of the estimation for various methods  with training sample size of 100. The FoSR is designed for one-dimensional functionals, so we implemented it with the input as the vectorization of the image, this is the main reason why it only maintains the continuity along horizontal directions and fails to detect other types of structure. There are no off-the-shelf implementation of 2D FoSR as far as we know. 

According to Table~\ref{table2}, with increasing sample sizes, all methods perform better, but our method outperforms the others in all scenarios. Our proposed method outperformed the other methods except for setting 1 with sample size 100.  Envelope method demonstrates better performance with the large sample size for setting 1 which also has larger 'block size'. Envelope method is outperformed by our proposed method and  \texttt{TV\_OLS} (regression after TV denoising) for setting 2 with all three sample sizes, when the block has various sizes. 

According to Figures~\ref{coefexample1} and~\ref{coefexample2} that our proposed method outperforms the other methods in recovering the true structure of the parameters, and achieves a cleaner cut on the boundary. Envelope method is able to capture and display the main pattern, however does not encourage sparsity of the edges, thus the solution is not ``smooth" anywhere. 
TV\_OLS (regression after TV denoising) estimated coefficient maps are more 'vague' at the boundaries, since the the smoothing step is done separately for individual images, the estimated change points/edges will not be consistent across the smoothed images.  For setting 2, TV\_OLS performs well when sample size is small, but it does not improve much when sample size increases, since the TV denoising for single images can only recover piecewise constant signals for the moderate-size blocks. The signal from small regions will be smoothed out regardless of increasing the sample size because the denoising step only involves data from one image. In comparison, our proposed method gains much performance improvement with increasing sample size, as the signal is strengthened by more observations. 
By simultaneously conducting Image-on-scalar Regression and Total Variation Smoothing, our proposed method distinguishes signal on small regions (which is shared in all observations) versus random noise (which has no consistent behavior across different images), while competing methods fail to do so.
OLS\_TV performed well for setting 1 and has a  ``smoother" estimation for the coefficient map than our method.  But it fails to detect the small regions in setting 2 and performed worse than TV\_OLS. 

To summarize, TVTR is a universal method for recovering piecewise constant signals in small and large areas. It shows evident advantages of recovering signal in small areas compared with the two-stage methods and envelope method.

\section{Application to Activity Data}
\label{sec:real}
As discussed in the introduction, for the NIMH family study of spectrum disorders, we are interested in removing noises from the minute-level physical activity measures, and meanwhile accounting for heterogeneity in physical activity due to differences in demographic characteristics as well as health-related factors such as disease groups. Hence the outcome measures in our example are one-dimensional time series of activity intensities assessed over two weeks for each participant. Given the skewness of the observed activity counts \citep{Shou2017}, we conducted a log (counts+1) transformation beforehand and pre-smoothed the outcome data using moving average filter over a $60$-minute window, that is, averaged across an hour. As shown in Figure \ref{raw}, the raw activity counts were fairly noisy, but we assume that the underlying activity status for a particular participant (e.g., being inactive, moderately or vigorously active) would transit smoothly and continuously over time. Therefore such data might benefit from total variation smoothing that would restrict dramatic fluctuations from minute to minute. The covariates that we accounted in the model include diagnosis, age, gender, body mass index (BMI) and day of week. The associations between time-varying physical activity intensities and each of these scalar variables were estimated using our proposed model. To make the results more clinically interpretable, we categorized BMI into underweight ($<18.5$), normal ($18.5\le$BMI$\le25$), overweight ($25<$BMI$\le30$) and obese (BMI$>30$), with the normal participants being the reference group. Similarly, age was also stratified into four categories: adolescence (under 18), adulthood (18 to 40), middle age (40 to 60) and elderly ($\ge 60$). There were 5 diagnostic groups, including healthy control, type I bipolar (BPI), type II bipolar (BPII), major depressive disorder (MDD), and other disorders. The sample included 339 subjects with 14 days of data and our final analysis was based on 301 subjects with non-missing covariates. In total, there were 4214 daily activity curves and the observations were balanced for the 7 days of the week (Monday through Sunday). 

\begin{figure}[h!]
 \centering
\begin{tabular}{cc}
\includegraphics[width=0.4\textwidth]{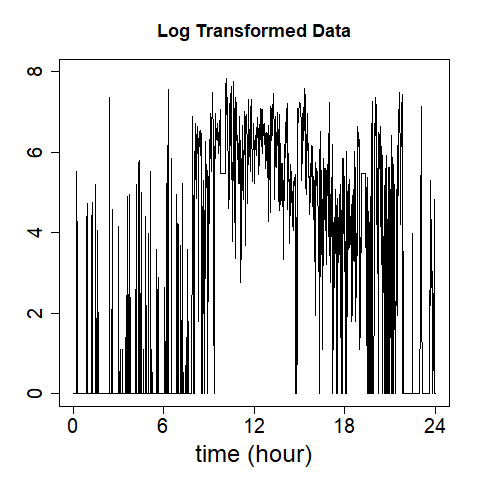}\hspace{-.6em}&
\includegraphics[width=0.4\textwidth]{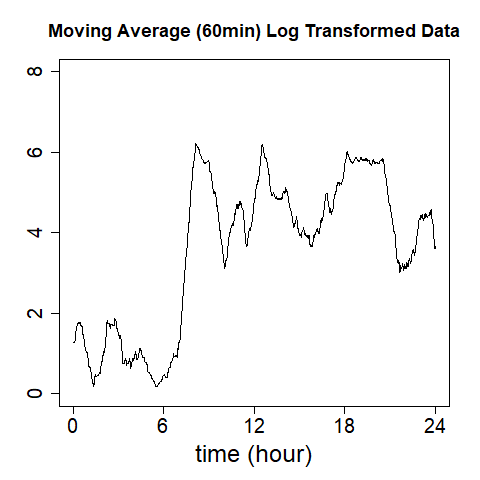} \hspace{-.6em}
\end{tabular}
\vspace{1cm}
\caption{An example of the observed daily activity counts after log transformation with (right) and without (left) pre-smoothing with moving average. }
\label{raw}
\end{figure}

The estimated time-varying effects for each covariate are shown as the red curve in Figure \ref{fig:asir1}. The point-wise confidence bands in red were obtained as the upper and lower $2.5\%$ quantiles over $100$ bootstrap samples. We observed that compared to the middle-age group, seniors tend to have an overall lower activity intensity throughout the day, especially from late afternoon to night. Both adolescent and adult groups have shown higher night-time activity and lower early morning activity, but the difference is more prominent between adolescent and middle-aged group. Participants who were obese were observed to have lower day-time activity and higher nighttime activity as compared to subjects with normal weights. The overweight subjects showed a similar trend but with a wider confidence band and smaller magnitude, so the trend were not significant. The underweight subjects were observed to have a reversed trend as compared to normal. In terms of diagnostic groups, BPI patients showed lower activity intensity later of the day (after 12pm) as compared with healthy controls, which is consistent with the findings reported in \citep{Shou2017}. Females were also observed to have higher average activity intensities during day time and lower intensities at night. Interestingly, when investigating the activity patterns over 7 days a week with Sunday as the reference, we independently observed similar patterns across weekdays (Monday to Friday), where the morning activity intensities are much higher than Sunday, but no difference were shown for the rest of the day. Note that on Friday, we observed a rising level of activity towards midnight, indicating that subjects were engaged in more late-night activities on Friday. Saturday demonstrated a non-differentiable pattern compared with Sunday. But a similar rising pattern were also observed towards midnight. 

To further evaluate our proposed method, we equally split the two weeks of activity data per subject into the training and testing datasets, where each dataset contained about 2107  days of measures respectively from the same set of subjects. We compared our proposed model with several competing methods including the minute-wise OLS regression, the function-on-scalar regression (FoSR) with B-spline basis functions, FoSR with $L_2$ penalty and the envelope method. The models were fit based on the training data and evaluated on the testing dataset by calculating the corresponding mean square errors (MSE) between the predicted activity counts and the observed counts. The prediction error (MSE) of our proposed method is 2.76, which is the smaller than both the minute-wise OLS (2.79), the FoSR with B-splines (2.80), as well as FoSR with $L_2$ penalty (2.79). The envelope method that had similar performance in the simulation studies demonstrated large MSE and is not suitable for the 1D application. As a reference, the total sample variance in the testing set is 2.90 which indicates that the activity data were very noisy. Although neither of the methods did a perfect job predicting activity counts, our proposed method were able to obtain interpretable estimates of for the association between activity and covariates and meanwhile demonstrate some improvement in prediction than the other methods.

\begin{figure}[H]
\centering
\vspace{-2.7em}
\caption{ Activity Data Analysis Results. } 
\label{fig:activity}
\begin{tabular}{cccc}
\hspace{-2em}
\includegraphics[width=0.25\textwidth]{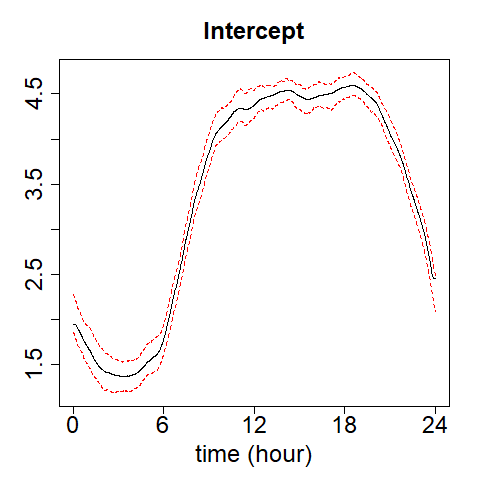}\hspace{-2em}&
\includegraphics[width=0.25\textwidth]{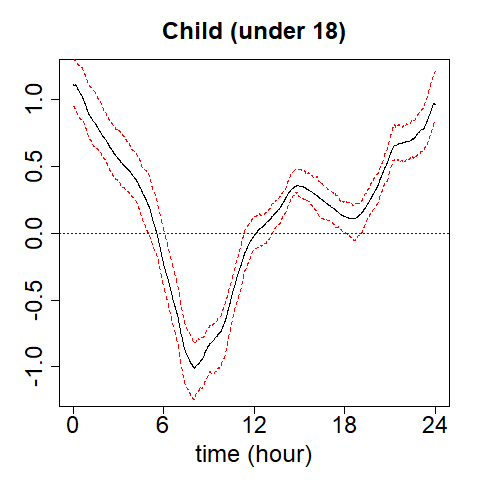} \hspace{-1em}&
\includegraphics[width=0.25\textwidth]{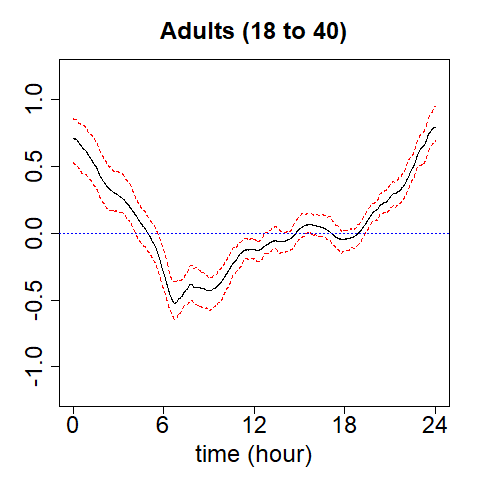} \hspace{-1em}&
\includegraphics[width=0.25\textwidth]{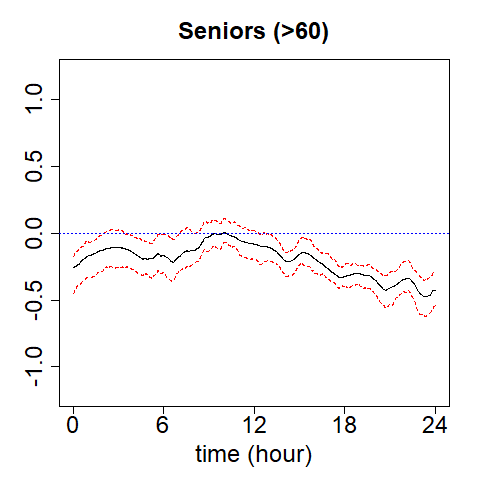} \hspace{-1em}\\
\hspace{-2em}
\vspace{-0.7em}
\includegraphics[width=0.25\textwidth]{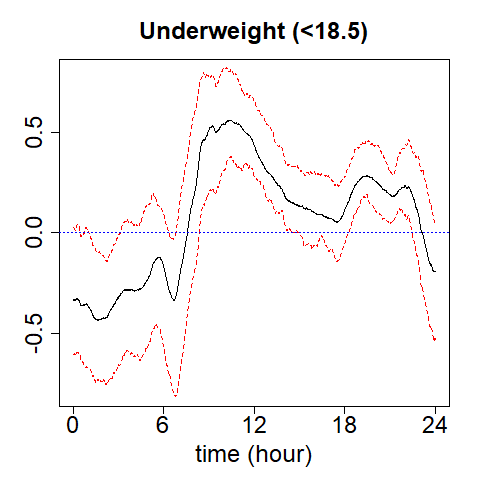}  \hspace{-1em}&
\includegraphics[width=0.25\textwidth]{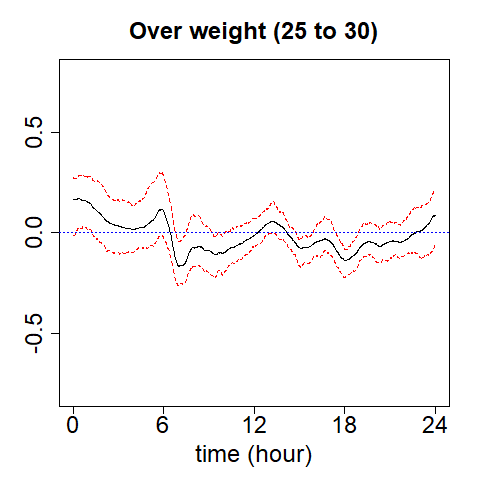} \hspace{-1em}&
\includegraphics[width=0.25\textwidth]{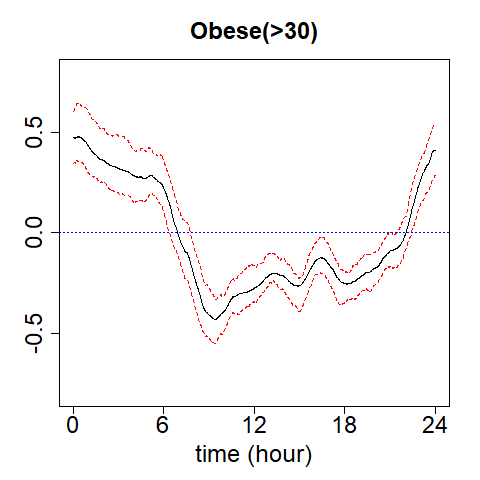}\hspace{-1em}&
\includegraphics[width=0.25\textwidth]{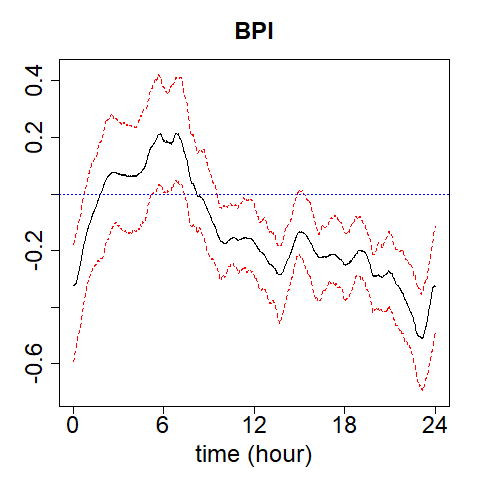}\hspace{-1em}\\
\hspace{-2em}\vspace{-0.7em}
\includegraphics[width=0.25\textwidth]{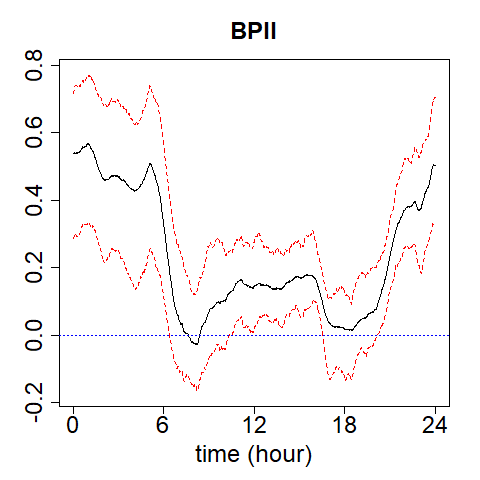}\hspace{-1em}&
\includegraphics[width=0.25\textwidth]{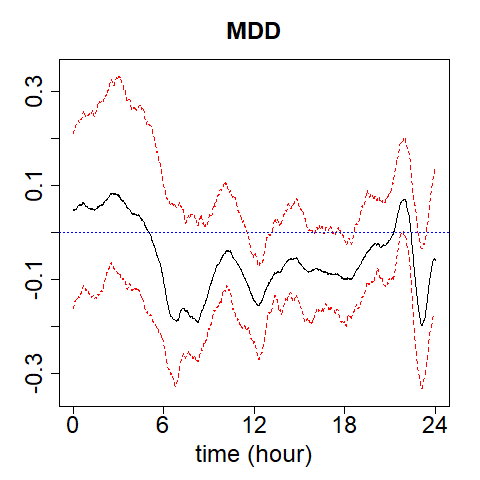}&
\includegraphics[width=0.25\textwidth]{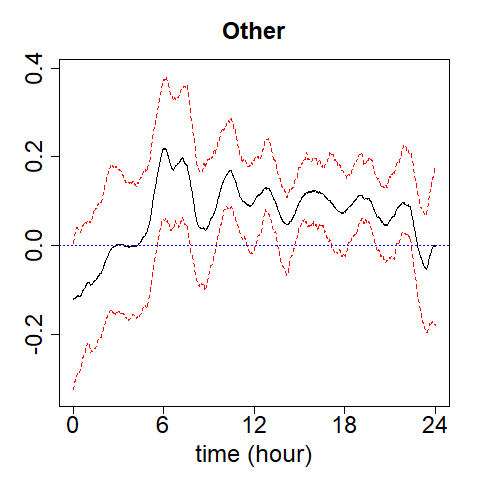}&
\includegraphics[width=0.25\textwidth]{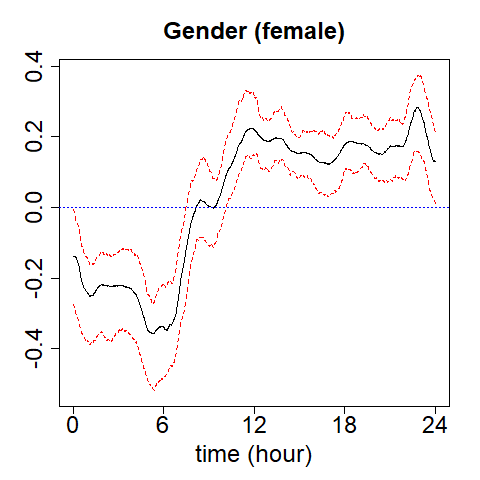}\\
\hspace{-2em}\vspace{-0.7em}
\includegraphics[width=0.25\textwidth]{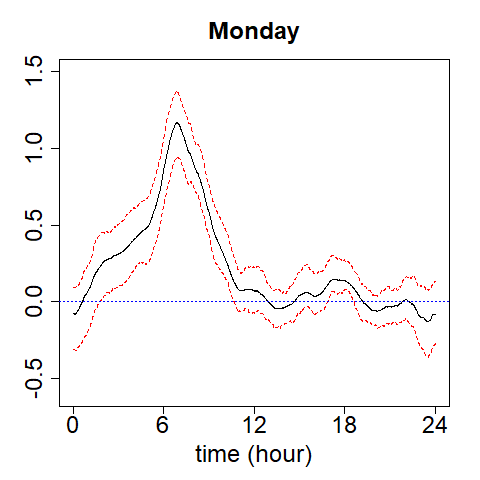}\hspace{-1em}&
\includegraphics[width=0.25\textwidth]{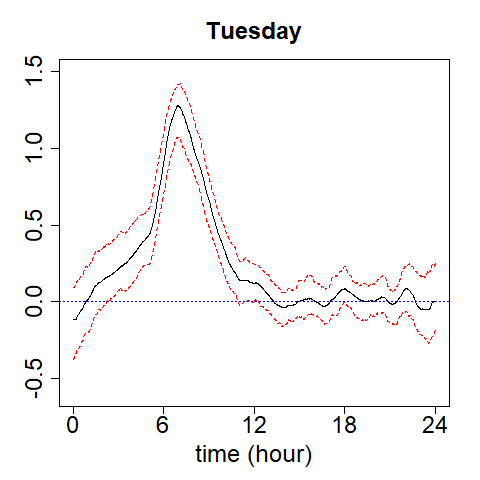}\hspace{-1em}&
\includegraphics[width=0.25\textwidth]{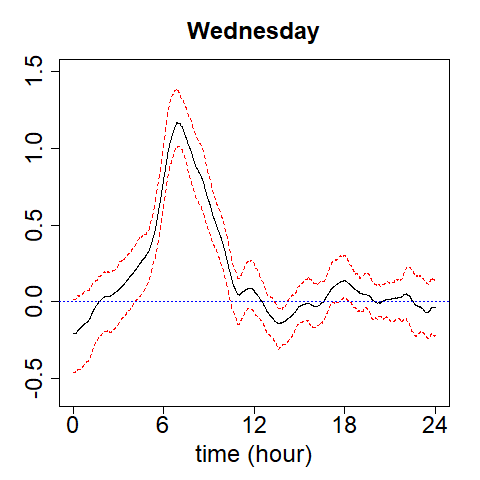}\hspace{-1em}&
\includegraphics[width=0.25\textwidth]{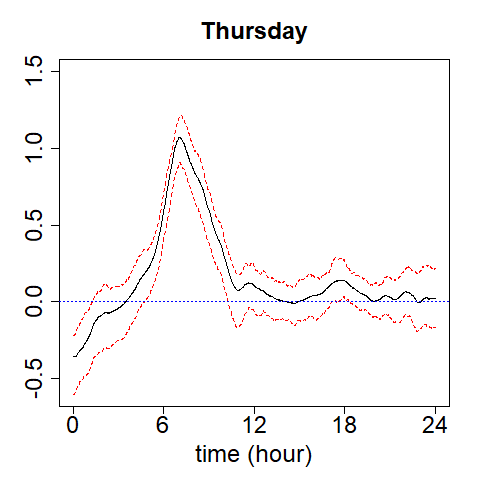}\hspace{-1em}\\
\hspace{-2em}\vspace{-0.7em}
\includegraphics[width=0.25\textwidth]{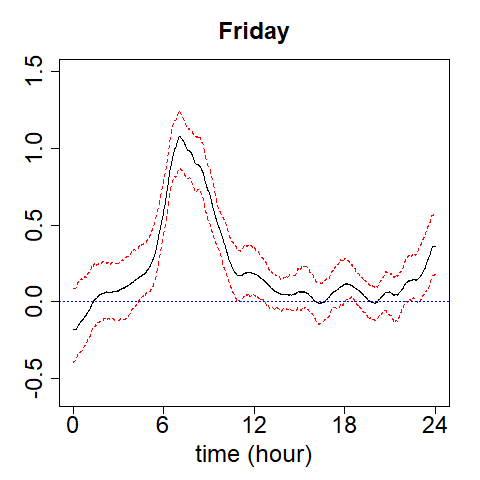}\hspace{-1em}&
\includegraphics[width=0.25\textwidth]{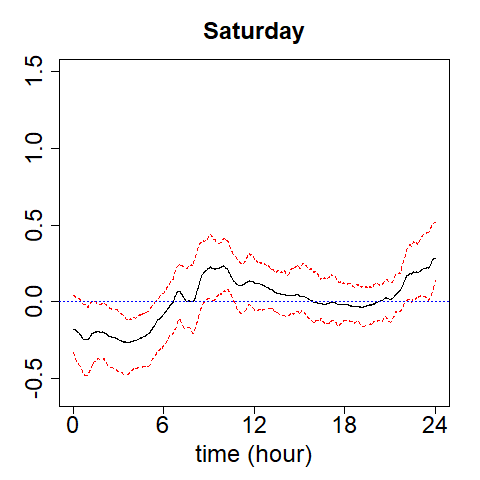}\hspace{-1em}\\
\end{tabular}
\label{fig:asir1}
\end{figure}

\section{Application to 3D Brain Imaging Analysis}
We now demonstrate the proposed method in analyzing 3D brain imaging. Our data came from a public data source of attention hyperactivity disorder (ADHD), the ADHD-200 sample, which was an initiative effort to understand neuro basis of the disease by ADHD consortium led by Dr. Michael Milham. The dataset contains structural and BOLD functional magnetic resonance imaging (MRI) scans from 776 subjects, including 491 typically developing controls (TDs) and 285 ADHD children. 

We focused on the gray matter voxel-based morphometry (VBM) maps preprocessed via the Burner pipeline using SPM8, downloaded from the Neuro Bureau (\url{https://www.nitrc.org/plugins/mwiki/index.php?title=neurobureau:BurnerPipeline}). The intensities of the images quantifies the normalized gray matter concentration changes at specific brain locations. The same dataset was used in the tensor envelope based regression method by \citep{li2017parsimonious}, which is one of the state-of-the-art methods for the 3D regression problem. Thus we compare the results of our proposed methods with the tensor envelope based regression and OLS estimates presented in \citep{li2017parsimonious}. The images are of dimension 121$\times$ 145$\times$ 121. Following \citep{li2017parsimonious}, we downsized the images to 30$\times$ 36$\times$ 30 using TensorReg. The covariates in the regression model include ADHD, age, gender, handedness. We also included the site indicators in the multivariate analysis to adjust for the systematic site effects. The estimated regression coefficients were mapped back to the 3D population-average template of the Burner VBM images. We also registered Harvard-Oxford Atlas \citep{Bohland2009} to the same template for references of the cortical regions. 

The computation cost of 3D data raised a challenge to many algorithms including ours. This specific example consists images from 770 subjects, each consists 32400 nodes and 94140 edges for a grid smooth structure.  Algorithm \ref{alg:palm} offers an iterative procedure of solving a TV denoising problem. In this real data example and our simulation studies, the algorithm converges in 10-20 iterations, when the parameters are initialized with random noise following normal distribution. The most computational expensive step of Algorithm \ref{alg:palm} is line 6, where one is solving a TV denoising problem for a graph with $32400\times 770$ nodes and $94140 \times 770$ edges. Solving TV denoising in a single step with this scale will cost memory failure on a personal computer. Fortunately, this task can be distributively solved. Since there are no edges across subjects, one can solve it by divide and conquer. i.e. within each iteration, at line 6,  one could divide the samples to several subgroups, solve for the $\mu^{(k+1)}$ for each subgroup, and combine the results together. The combined estimates of  $\mu^{(k+1)}$ is equivalent to the solutions from computing the combined TV denoising problem. For this dataset, we solved line 6 by `divide and conquer' with mini-batch size of 8 samples, each with the tuning parameter $\lambda=0.05$. The computation was conducted on a Windows 10 desktop computer with Intel(R) Core(TM) i7-7820U CPU@3.60GHz processor (16 cores), 32.00 GB installed memory(RAM).  It takes approximately 10 hours for obtaining the estimates under a chosen tuning parameter. Admittedly, the computational challenge caused problem in conducting cross validation for tuning parameters and for obtaining the bootstrap confidence intervals on a single machine. However, the algorithm is highly distributed and bootstrap would be practical using parallel computing resources.

Here we demonstrate the estimated regression coefficients in 3D representation based on our proposed method. The coefficient maps correspond to the association between voxel-wise gray matter morphometry with ADHD, age, gender and handedness (1 for left-handedness and 0 for right-handedness). We only demonstrated the areas where the coefficients' magnitudes are in the 10\%, 5\%, 2.5\% and 0.5\% tails across the whole brain. 

As shown in the Figure \ref{fig:ADHDcoef}, we observe that after adjusting for age, gender, handedness and site effects, ADHD showed the largest reduction in the gray matter volume in the frontal lobe such as the anterior and posterior cingulate gyrus and frontal operculum. Such regions have been known to be associated with sensory and motor responses, as well as cognitive and emotional functions. Significant cortical thinning in anterior cingulate gyrus has been reported among children with ADHD \citep{vanRooij2015}. These findings were novel in our analysis methods, while in the results of the envelope method \citep{li2017parsimonious} the effect of frontal lope were not captured. Instead, their methods identified superior temporal gyrus, pyramid and uvula in cerebellum. Additionally, we observed that cortical thickness was increased in frontal operculum with older children, which is linked with better task controls\citep{Higo2011}. Females are shown to have thicker temporal fusiform cortex and frontal medial cortex. Interestingly, although the magnitude of the handedness effects are small, we observed an asymmetric association where the left-handed children tend to have greater gray matter thickness on the left hemisphere of brain in regions such as posterior parahippocampal gyrus, lingual gyrus and temporal occipital fusiform cortex. Such findings are consistent with several previous studies \citep{cuzzocreo2009}.  

\begin{figure}[h!]
 \centering
\begin{subfigure}{0.45\textwidth}
\includegraphics[height=1.4in]{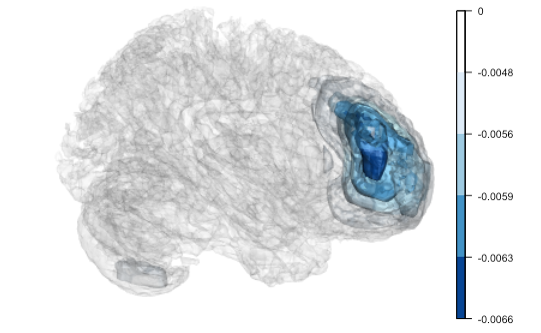}
\caption{ADHD}
\end{subfigure}
\begin{subfigure}{0.45\textwidth}
\includegraphics[height=1.4in]{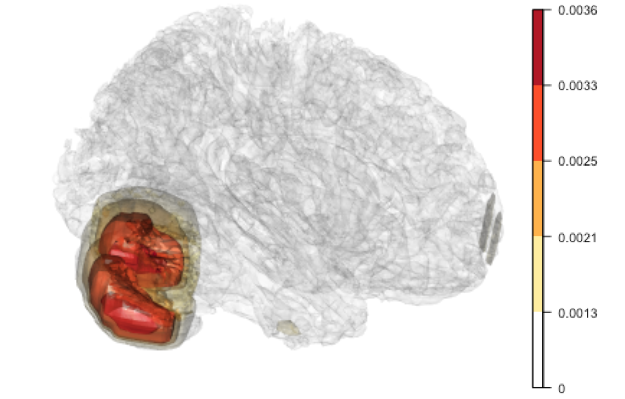}
\caption{Age}
\end{subfigure}
\begin{subfigure}{0.45\textwidth}
\hspace{.3em}
\includegraphics[height=1.45in]{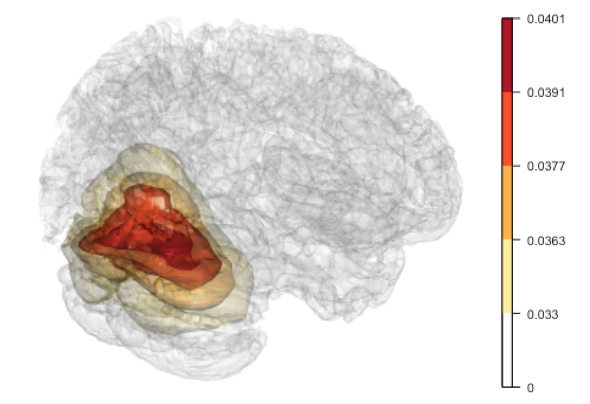}
\caption{Gender}
\end{subfigure}
\begin{subfigure}{0.45\textwidth}
\hspace{1em}
\includegraphics[height=1.5in]{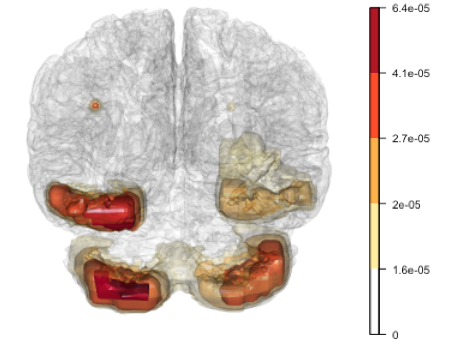} 
\caption{Left Handed}
\end{subfigure}

\caption{\small{ADHD analysis results: from light to dark are regions with estimated regression coefficients that lie above the 10\%, 5\%, 2.5\% and 0.5\% tails across the whole brain, receptively for covariates including ADHD, age, gender (females) and left-handedness. The color bars correspond to the positive or negative coefficients that are shown in the images.}}
 \label{fig:ADHDcoef}
\end{figure}

\section{Conclusion and Discussion}
\label{sec:discussion}

We propose a new method for tensor-on-scalar regression with a TV regularization term to encourage sparsity of adjacent outcome estimates. The proposed method  can be universally applied to tensor-on-scaler regression problem with any dimensions.  The proposed  algorithm is showed to converge, it is both scalable and consistent. Our method also has the appealing new feature of incorporating prior knowledge for the smoothness by user-defined adjacency matrix.  A extension of application of the method is in  brain connectivity studies, where the smoothness can be defined as the connected area. 

Although the method focuses on providing point estimation. In medical translational researches, in addition to signal recovering accuracy, proper inference of variability of the estimation is desirable. Thus we  demonstrate bootstrap confidence interval can be computed to help interpreting the result in the real data example of activity data analysis.  However, as dimension of the outcome increases from 1D to 3D, the computational burden increases tremendously that bootstrap CI is might be too costly to compute without utilizing powerful parallel computing cluster, which is a limitation of our current work.

Our current method only estimate the mean association not taking into account correlation of tensor-outcomes. It remains for future research how to more efficiently estimate the mean, with multiple outcome tensors per subject by adding random effect in the model.

\section*{Acknowlegements}
\label{sec:acknowledgement}

The authors would like to thank Drs. Xin Zhang and Lexin Li for providing codes and guidance of their envelope tensor regression algorithms. We also would like to acknowledge Junchi Li and Michael Daniels for discussion and advices to improve the quality of the work, and Jinshan Zeng for his comments on the convergence properties of the algorithm. HS's research was partially supported by the Intramural Research Program of the NIMH. The NIMH family study  was conducted under the support of by the Intramural Research Program (Merikangas; grant number Z-01-MH002804) and under clinical protocol 03-M-0211 (NCT00071786).

\section*{Disclaimer}
The views and opinions expressed in this article are those of the authors and should not be construed to represent the views of any of the sponsoring organizations, agencies or US Government.


\bigskip
\begin{center}
{\large\bf SUPPLEMENTARY MATERIAL}
\end{center}

{\small
\bibliographystyle{agsm}
\bibliography{fused}
}

\appendix

In this appendix we present technical details and accompanying lemmas which are necessary for the main results in the paper  {\it Total Variation Regularized Tensor-on-scalar Regression}. When we make references to equations or theorems etc. in the main document, we follow the numbering scheme of the main document, and the references do not have any alphabets in them.

\section{Proofs in Section~\ref{sec:model}}
\label{app:proof}
\subsection{Proof of Lemma~\ref{lem:theta2gamma}}
\begin{proof}[Proof of Lemma~\ref{lem:theta2gamma}]
Recall that $H_X$ is the projection matrix onto $\text{span}(X)^\perp$, therefore $\theta\in \text{span}(X)$ is equivalent to $H_X\theta=0$. By a variable transformation $\theta=X\Gamma$, solving (P)  is equivalent to solving (CP), and $\that=X\hat{\Gamma}$. When $X^TX$ is invertible, we solve $\hat{\Gamma}=(X^TX)^{-1}X^T \that$.
\end{proof}

\subsection{Proof of Theorem~\ref{thm:admm_converge}}
\begin{proof}[Proof of Theorem~\ref{thm:admm_converge}]
We rewrite problem~\eqref{eq:augmented_admm} as
\bas{
\min & \quad f(\theta)+g_1(\eta)+g_2(\mu)\\
s.t. & \quad A_1\theta+A_2\eta+A_3\mu=0
}
where
\bas{
&f(\theta)=\frac{1}{2}\|y-\theta\|^2, g_1(\eta) = \delta((I-H_v)\eta=0), g_2(\mu)=\|D_v^T\mu\|_1\\
& A_1 = \begin{bmatrix} -I\\-I\end{bmatrix}, A_2  = \begin{bmatrix} I\\0\end{bmatrix}, A_3  = \begin{bmatrix} 0 \\ I \end{bmatrix} \in \bR^{2nM\times nM}
}
Algorithm~\ref{alg:palm} is presented in the scaled form of ADMM. As is shown in~\citet{boyd2011distributed}, it is equivalent to the unscaled form as below.
Denote the Lagrangian as 
\bas{
\cL(\theta,\eta,\mu,U,V,\rho) = f(\theta) + g_1(\eta) + g_2(\mu) + \begin{bmatrix}U\\V\end{bmatrix}^T(A_1\theta+A_2\eta+A_3\mu)+\frac{\rho}{2}\|A_1\theta+A_2\eta+A_3\mu\|^2.
}
The unscaled form is
\bas{
& \theta^{(k+1)} = \arg\min_\theta  f(\theta) + \begin{bmatrix}U\\V\end{bmatrix}^T(A_1\theta)+\frac{\rho}{2}\|A_1\theta+A_2\eta^{(k)}+A_3\mu^{(k)}\|^2\\
& \eta^{(k+1)} = \arg\min_{\eta} g_1(\eta) + \begin{bmatrix}U\\V\end{bmatrix}^T(A_2\eta)+\frac{\rho}{2}\|A_1\theta^{(k+1)}+A_2\eta+A_3\mu^{(k)}\|^2\\
& \mu^{(k+1)} = \arg\min_{\mu} g_2(\mu) + \begin{bmatrix}U\\V\end{bmatrix}^T(A_3\mu)+\frac{\rho}{2}\|A_1\theta^{(k+1)}+A_2\eta^{(k+1)}+A_3\mu\|^2\\
& \begin{bmatrix}U^{(k+1)}\\V^{(k+1)}\end{bmatrix} = \begin{bmatrix}U^{(k)} \\V^{(k)} \end{bmatrix}+\rho(A_1\theta^{(k+1)}+A_2\eta^{(k+1)}+A_3\mu)
}
We first convert the three-block problem into a two-block problem with the technique in~\cite{chen2016direct}.

By first-order optimality condition in iteration $k$, we have
\bas{
& f(\theta)-f(\theta^{(k+1)}) + (\theta-\theta^{(k+1)})\left( -A_1^T(\lambda^k-\rho(A_1\theta+A_2\eta+A_3\mu)) \right)\ge 0, \forall \theta\in \bR^{nM}\\
& g_1(\eta) - g_1(\eta^{(k+1)}) + (\eta-\eta^{(k+1)})\left( -A_2^T(\lambda^k-\rho(A_1\theta+A_2\eta+A_3\mu)) \right)\ge 0, \forall \eta \in \bR^{nM}\\
& g_2(\mu)- g_2(\mu^{(k+1)}) + (\mu - \mu^{(k+1)})\left( -A_3^T(\lambda^k-\rho(A_1\theta+A_2\eta+A_3\mu)) \right)\ge 0, \forall \mu \in \bR^{nM}\\
}
Note that $A_2^TA_3 = 0$, we have
\bas{
& f(\theta)-f(\theta^{(k+1)}) + (\theta-\theta^{(k+1)})\left( -A_1^T(\lambda^k-\rho(A_1\theta+A_2\eta+A_3\mu)) \right)\ge 0, \forall \theta\in \bR^{nM}\\
& g_1(\eta) - g_1(\eta^{(k+1)}) + (\eta-\eta^{(k+1)})\left( -A_2^T(\lambda^k-\rho(A_1\theta+A_2\eta)) \right)\ge 0, \forall \eta \in \bR^{nM}\\
& g_2(\mu)- g_2(\mu^{(k+1)}) + (\mu - \mu^{(k+1)})\left( -A_3^T(\lambda^k-\rho(A_1\theta+A_3\mu)) \right)\ge 0, \forall \mu \in \bR^{nM}\\
}
which is also the first order optimality condition for the regime:
\bas{
&\theta^{(k+1)} = \arg\min f(\theta) - \lambda^{(k)T}(A_1\theta) + \frac{\rho}{2}\|A_1\theta+A_2\eta+A_3\mu\|^2;\\
& (\eta^{(k+1)},\mu^{(k+1)}) = \arg\min_{\eta,\mu} g_1(\eta)+g_2(\mu) - \lambda^{(k)T}(A_2\eta-A_3\mu) + \frac{\rho}{2}\|A_1\theta^{(k+1)}+A_2\eta+A_3\mu\|^2;\\
& \begin{bmatrix}U^{(k+1)}\\V^{(k+1)}\end{bmatrix} = \begin{bmatrix}U^{(k)}\\V^{(k)}\end{bmatrix} -\rho(A_1\theta^{(k+1)}+A_2\eta^{(k+1)}+A_3\mu^{(k+1)})
}
Clearly, this is a specific application of the two-block ADMM by regarding $(\eta, \mu)$ as one variable, $B:=[A_2, A_3]$ as one matrix, and $g(\eta,\mu):=g_1(\eta)+g_2(\mu)$ as one function. 
Existing convergence results for the two-block ADMM thus hold for our case.

Now we note that $f$ is Lipschitz differentiable and strongly convex, both $f(\theta)$ and $g(\eta,\mu)$ are closed (their sublevel sets are closed) and proper (they neither take on the value $-\infty$ nor are they uniformly equal to $\infty$)).
Also $A$ and $B$ both have full column rank. Therefore by Theorem 7 in~\citet{nishihara2015general}, Algorithm~\ref{alg:palm} converges to the unique global solution of \eqref{eq:obj} linearly.
\end{proof}

\section{Proofs in Section~\ref{sec:consistency}}
\label{sec:proof_consistent}
\subsection{Proof of Theorem~\ref{th:consist}}
\begin{proof}[Proof of Theorem~\ref{th:consist}]
Define $H_X=I-H_v$ is the projection matrix projecting each voxel to $\text{span}(X)$.
Let $\hat{\theta}\in \bR^{nM}$ be the optimal solution for the following constrained optimization problem.
\bas{
\min_\theta \quad & \| \ttvec(Y^T)-\theta \|_F^2+\lambda \|D_v^T\theta\|_{\ell_1}\\
\mbox{s.t.} \quad & H_v\theta=0.
}
When $\text{rank}(X)=p$, by Lemma~\ref{lem:theta2gamma}, $\hat{\Gamma} = (X^TX)^{-1}X^T\text{mat}(\theta)_{n\times M}$. We first prove an oracle inequality for $\hat{\theta}$.
By the KKT condition, $\exists z\in \sign(D_v^T\hat{\theta}), \alpha\in \bR^{nM}$ such that 
\ba{
& 2(\hat{\theta}-y)+\lambda D_v^Tz+H_v\alpha=0 \label{eq:first_order}\\
& H_v\hat{\theta}=0 \nonumber
}
Multiplying $H_X$ on the left of Eq.~\eqref{eq:first_order}, we have $2H_X(\hat{\theta}-y)+\lambda H_XD_v^Tz =0$.
Equivalently, 
\bas{\forall \bar{\theta} \in R^{nM},\qquad 2\bar{\theta}^T H_X(\hat{\theta}-y)+\lambda \bar{\theta}^T H_XD_v^Tz =0}
Note that $H_X$ is the projection matrix to the column space of $\text{span}(X)$, it suffices to consider $\forall \bar{\theta}=\ttvec(\bar{\Gamma}^TX^T)$.
By definition of the sign operator, the following holds:
\bas{
&\hat{\theta}^T (\ttvec(Y^T)-\hat{\theta}) = \lambda\|D_v^T\hat{\theta}\|_{\ell_1}\\
&\bar{\theta}^T (\ttvec(Y^T)-\hat{\theta}) \le \lambda\|D_v^T\bar{\theta}\|_{\ell_1}
}
Subtracting the former from the latter, and replacing $\ttvec(Y^T)$ with $\theta^*+\epsilon$, we get
\bas{
(\bar{\theta}-\hat{\theta})^T (\theta^*-\hat{\theta})\le (\hat{\theta}-\bar{\theta})^T  \epsilon +\lambda\|D_v^T \bar{\theta}\|_{\ell_1}-\lambda\|D_v^T\hat{\theta}\|_{\ell_1}
}
Note $(\bar{\theta}-\hat{\theta})^T (\theta^*-\hat{\theta}) = \frac{1}{4}\left( \|\bar{\theta}-\hat{\theta}\|^2+\|\theta^*-\hat{\theta}\|^2 - \|\bar{\theta}-\theta^*\|^2 \right)$,
\ba{
\|\bar{\theta}-\hat{\theta}\|^2+\|\theta^*-\hat{\theta}\|^2 \le \|\bar{\theta}-\theta^*\|^2+ 4(\hat{\theta}-\bar{\theta})^T  \epsilon +4\lambda\|D_v^T\bar{\theta}\|_{\ell_1}-4\lambda\|D_v^T\hat{\theta}\|_{\ell_1}
\label{eq:diff_raw}
}

To bound $(\hat{\theta}-\bar{\theta})^T  \epsilon $, note $DD^T$ is the graph Laplacian, therefore when the graph is connected, we have $\text{ker}(D^T)=\text{ker}(DD^T)=\ttspan\{1_M\}$. Define $D^\dagger$ be the pseudo inverse of $D$, then $I-(D^\dagger)^T D^T$ is the projection matrix onto ker($D^T$),
\bas{
(\hat{\theta}-\bar{\theta})^T \epsilon = & \sum_{i=1}^n (\hat{\theta}_i -\bar{\theta}_i)^T \epsilon_i\\
= & \sum_{i=1}^n   ((I-(D^\dagger)^T D^T)\epsilon_i)^T(\hat{\theta}_i-\bar{\theta}_i)+((D^\dagger)^T D^T\epsilon_i)^T(\hat{\theta}_i-\bar{\theta}_i)\\
\le &  \sum_{i=1}^n \|(I-(D^\dagger)^T D^T)\epsilon_i\|\cdot \|\hat{\theta}_i -\bar{\theta}_i\|+\|(D^\dagger)^T\epsilon_i\|_\infty \cdot \|D^T(\hat{\theta}_i-\bar{\theta}_i)\|_{\ell_1}.
}
In view of the fact that $\epsilon_i\sim \cN(0,\sigma^2I_M)$ and $(I-(D^\dagger)^T D^T)$ being a projection matrix to a one dimensional space, by the tail bound for Gaussian random variables, $\forall i\in [n], \forall \delta>0$, 
\bas{
P(\|(I-(D^\dagger)^T D^T)\epsilon_i\|\ge 2\sigma\sqrt{2\log(2enM/\delta)} )\le \frac{\delta}{2n}
}
For the second part, by the maximal inequality for Gaussian random variables (\cite{massart2007concentration} Thm 3.12), and the variance of elements of $(D^\dagger)^T)\epsilon_i$ is upper bounded by $\rho^2\sigma^2$,
\bas{
P(\|(D^\dagger)^T\epsilon_i\|_\infty \ge \rho\sigma\sqrt{2\log(2emnM/\delta)} )\le \frac{\delta}{2n}
}
Applying union bound, with probability at least $1-\delta$,
\beq{
\bsplt{
(\hat{\theta}-\bar{\theta})^T \epsilon \le& \sum_{i=1}^n \left( 2\sigma\sqrt{2\log(2enM/\delta)}\|\hat{\theta}_i -\bar{\theta}_i\|+ \rho\sigma\sqrt{2\log(2emnM/\delta)}\|D^T(\hat{\theta}_i-\bar{\theta}_i)\|_{\ell_1} \right)\\
\le &  2\sigma\sqrt{2\log(2enM/\delta)} \|\hat{\theta} -\bar{\theta}\|+ \rho\sigma\sqrt{2\log(2emnM/\delta)}\left( \sum_{i=1}^n \|D^T(\hat{\theta}_i-\bar{\theta}_i)\|_{\ell_1} \right)
}
\label{eq:theta_eps}
}

By the triangle inequality, 
\bas{
\|D^T(\that-\tbar)_{T^c}\|_{\ell_1}-\|D^T\that_{T^c}\|_{\ell_1}\le \|D^T\tbar_{T^c}\|_{\ell_1}\\
\|D^T\tbar_{T}\|_{\ell_1}-\|D^T\that_{T}\|_{\ell_1}\le \|D^T(\tbar-\that)_{T}\|_{\ell_1}
}
Hence
\bas{
&\|D^T(\hat{\theta}_i-\bar{\theta}_i)\|_{\ell_1}+\|D^T(\bar{\theta}_i)\|_{\ell_1}-\|D^T(\hat{\theta}_i)\|_{\ell_1} \\
=& \|D^T(\hat{\theta}_i-\bar{\theta}_i)_T\|_{\ell_1}+\|D^T(\hat{\theta}_i-\bar{\theta}_i)_{T^c}\|_{\ell_1}+\|D^T(\bar{\theta}_i)_T\|_{\ell_1}+\|D^T(\bar{\theta}_i)_{T^c}\|_{\ell_1}-\|D^T(\hat{\theta}_i)_T\|_{\ell_1}-\|D^T(\hat{\theta}_i)_{T^c}\|_{\ell_1} \\
\le& 2\|D^T(\hat{\theta}_i-\bar{\theta}_i)_T\|_{\ell_1}+2\|D^T(\bar{\theta}_i)_{T^c}\|_{\ell_1}
}
By Definition 1 ,
$
\|D^T(\hat{\theta}_i-\bar{\theta}_i)_T\|_{\ell_1} \le \kappa_T^{-1}\sqrt{|T|}\|\hat{\theta}_i-\bar{\theta}_i\|.
$
We now plug above and \eqref{eq:theta_eps} back to \eqref{eq:diff_raw}, and take $\lambda = \rho\sigma\sqrt{\log(mnM/\delta)}$, then with probability at least $1-c_7n^{-1}$,
\beq{
\bsplt{
\|\bar{\theta}-\hat{\theta}\|^2+\|\theta^*-\hat{\theta}\|^2 \le & \|\bar{\theta}-\theta^*\|^2+ 8\sigma\sqrt{2\log(2en/\delta)} \|\hat{\theta} -\bar{\theta}\| + 4\lambda \|(D\bar{\theta})_{T^c}\|_{\ell_1} + 4\lambda \kappa_T^{-1}\sqrt{|T|}\|\hat{\theta}-\bar{\theta}\|}
\label{eq:tbar_minus_that}
}
Use Young's inequality, 
\bas{
8\sigma\sqrt{2\log(2en/\delta)} \|\hat{\theta} -\bar{\theta}\|\le & \frac{1}{2}\|\that-\tbar\|^2+64\sigma^2\log\left( \frac{2en}{\delta} \right)\\
4\lambda \kappa_T^{-1}\sqrt{|T|}\|\hat{\theta}-\bar{\theta}\| \le & \frac{1}{2}\|\that-\tbar\|^2+ 8\lambda^2\kappa_T^{-2}|T|
}
Canceling out $\|\that-\tbar\|^2$ on both sides of \eqref{eq:tbar_minus_that}, 
\bas{
\|\theta^*-\hat{\theta}\|^2 \le & \|\bar{\theta}-\theta^*\|^2+ 4\lambda \|(D_v^T\bar{\theta})_{T^c}\|_{\ell_1} +  64\sigma^2\log\left( \frac{2enM}{\delta} \right)+ 8\lambda^2\kappa_T^{-2}|T|
} 

Taking infimum on the right and plugging in $\lambda$ we have 
\bas{
&\|\theta^*-\hat{\theta}\|^2\le \\
 & \inf_{\bar{\theta}\in \bR^{n M}: H_X\bar{\theta}=\bar{\theta}} \left\{ \|\bar{\theta}-\theta^*\|^2+ 4\lambda \|(D_v^T\bar{\theta})_{T^c}\|_{\ell_1} \right\} + 64\sigma^2\log\left( \frac{2enM}{\delta} \right)+ 8\rho^2\sigma^2\log\left(\frac{mnM}{\delta}\right)\kappa_T^{-2}|T|
}
\label{eq:bound_theta}
\end{proof}

\subsection{Proof of Corollary~\ref{cor:gamma_oracle} }
Corollary~\ref{cor:gamma_oracle} can be proved by combining Lemma~\ref{lem:theta2gamma} and Theorem~\ref{th:consist}.

\begin{proof}[Proof of Corollary~\ref{cor:gamma_oracle}]
When $\frac{1}{n}X^TX=I_p$, we will be able to control the error in $\Gamma$, notice that $\|\ttvec(A)\|_2^2 = \|A\|_F^2$, we have from \eqref{eq:bound_theta} that 
\bas{
& \|\hat{\Gamma}-\Gamma^*\|_F^2 = \tr((X^TX)^{-1}X^T\text{mat}((\theta^*-\that) (\theta^*-\that)^T)X(X^TX)^{-1}) \\
= & \frac{1}{n} \|\theta^*-\hat{\theta}\|^2\\
=& \inf_{\bar{\theta}\in \bR^{n M}: H_X\bar{\theta}=\bar{\theta}} \left( \frac{1}{n} \|\bar{\theta}-\theta^*\|^2+ \frac{4\lambda}{n} \|(D_v^T\bar{\theta})_{T^c}\|_{\ell_1}  \right) + \frac{1}{n}\left( 4\sigma\sqrt{2\log(2enM/\delta)}+2\lambda \kappa_T^{-1}\sqrt{|T|}\right)^2\\
\le &  \inf_{\bar{\Gamma}\in \bR^{p\times M}} \left( \|\bar{\Gamma}-\hat{\Gamma}\|_F^2+ \frac{4\lambda}{ n} \|(X\bar{\Gamma}D)_{T^c}\|_{\ell_1} \right) + \frac{1}{n }\left( 4\sigma\sqrt{2\log(2enM/\delta)}+2\lambda \kappa_T^{-1}\sqrt{|T|}\right)^2
}
\end{proof}

\end{document}